\crefname{equation}{}{}
\crefname{algocf}{Algorithm}{Algorithms}
\crefname{equation}{}{}
\colorlet{refkey}{orange!20}
\colorlet{labelkey}{blue!30}
\crefname{algocf}{Algorithm}{Algorithms}
\numberwithin{equation}{section}
\newtheorem{theorem}{Theorem}[section]
\newtheorem{lemma}[theorem]{Lemma}
\crefname{claim}{Claim}{Claims}
\newtheorem*{question*}{Question}
\theoremstyle{definition}
\newtheorem{definition}[theorem]{Definition}
\newtheorem*{definition*}{Definition}
\theoremstyle{remark}
\newcommand{\mb}{\mathbb}
\newcommand{\mbm}{\mathbbm}
\newcommand{\mc}{\mathcal}
\newcommand{\eps}{\epsilon}
\newcommand{\assign}{\leftarrow}
\newcommand{\tsc}{\textsc}
\title{Online Edge Coloring via Tree Recurrences and Correlation Decay}
\author[A1]{Janardhan Kulkarni}
\address{Microsoft Research, Redmond.}
\email{jakul@microsoft.com.}
\author[A2]{Yang P. Liu}
\address{Department of Mathematics, Stanford University,
Stanford, CA 94305, USA}
\email{yangpliu@stanford.edu}
\author[A3]{Ashwin Sah}
\author[A4]{Mehtaab Sawhney}
\address{Department of Mathematics, Massachusetts Institute of Technology, Cambridge, MA 02139, USA}
\email{\{asah,msawhney\}@mit.edu}
\author[A5]{Jakub Tarnawski}
\address{Microsoft Research, Redmond.}
\email{jatarnaw@microsoft.com.}
\begin{document}

\begin{abstract}
We give an online algorithm that with high probability computes a $\left(\frac{e}{e-1} + o(1)\right)\Delta$ edge coloring on a graph $G$ with maximum degree $\Delta = \omega(\log n)$ under online {\em edge arrivals} against oblivious adversaries, making first progress on the conjecture of Bar-Noy, Motwani, and Naor in this general setting. Our algorithm is based on reducing to a matching problem on locally treelike graphs, and then applying a tree recurrences based approach for arguing correlation decay. 
\end{abstract}

\maketitle

\section{Introduction}
Given a graph $G:=(V, E)$ with maximum degree $\Delta$, the edge coloring problem is to assign colors to edges such that any two edges sharing a common vertex get different colors.
A well-known theorem by Vizing \cite{vizing64} says that every graph can be edge-colored using $\Delta + 1$ colors; furthermore, such a coloring can be found in polynomial time.
From an algorithmic standpoint, a remarkable aspect of Vizing's theorem is that it achieves the optimal bound for the problem, as $\Delta$ colors are necessary for every graph with maximum degree $\Delta$ and it is NP-hard to distinguish whether a graph needs $\Delta$ or $\Delta+1$ colors \cite{holyer81,kochol10}.

Given a proper edge coloring of a graph, each color class induces a matching; thus, any edge coloring of a graph partitions the edge set into a collection of matchings.
This view of edge coloring plays an important role in its applications to routing in switching networks and reconfigurable topologies \cite{BMN92,aggarwal2003switch}.
In routing applications, however, the edges of graph arrive {\em online}, which models the arrival of new traffic that needs to be routed between two switches or servers.
This was the motivation that led Bar-Noy, Motwani, and Naor \cite{BMN92} to initiate the study of edge coloring in the online setting.
Here, the online algorithm has knowledge of the vertex set $V$ of the graph and the maximum degree $\Delta$. 
However, the edges are revealed one by one, and the online algorithm has to irrevocably assign a color to each newly arriving edge. 
The goal is to minimize the number of colors used by the online algorithm while maintaining a valid edge coloring of the graph at all time steps.
In the original paper, \cite{BMN92} showed that for graphs with maximum degree $O(\log n)$, no online algorithm can maintain a proper coloring using fewer than $2\Delta-1$ colors
--- a trivial bound achieved by the greedy algorithm which simply assigns every arriving edge any color that is not used at either endpoint.
However, this result should be interpreted as a lower bound on the additive error rather than multiplicative, as it only applies to graphs with logarithmic maximum degree.
Consequently, the focus has shifted to the much more interesting regime of $\Delta = \omega(\log n)$.
In this regime, \cite{BMN92} conjectured that the online algorithm that uses $\Delta + O(\sqrt \Delta \log n)$ colors and samples a color for each edge uniformly at random from the set of valid colors succeeds with constant probability.
However, we do not know how to analyze this algorithm or give any online algorithm that beats the competitive ratio of 2 achieved by the trivial greedy algorithm; this has been raised as a challenging open problem by all subsequent works.

\medskip

Over the past three decades, significant attempts have been made towards resolving this conjecture. 
A competitive ratio of $1+o(1)$ is achievable in important special cases: random-order (instead of adversarial-order) edge arrival \cite{aggarwal2003switch,bahmani2012online,BGW21} and one-sided vertex arrivals (instead of edge arrivals) for bipartite graphs \cite{CohenPW19}. Very recently, a competitive ratio of 1.9 was obtained by Saberi and Wajc~\cite{SW21} for general \emph{vertex} arrivals on graphs of maximum degree $\omega(\log n)$.
Despite these impressive results, which we discuss in detail in \cref{sec:related}, no algorithm was known to beat the competitive ratio of 2 in the most general setting of online edge arrivals considered in the conjecture of \cite{BMN92}.
Moreover, the barrier of 2 for the edge coloring problem seemed to parallel a similar barrier for the ``dual''
problem of online matching.
Namely, a surprising recent result~\cite{gamlath2019online} showed a lower bound of 2 for online matching in the edge arrival setting (whereas better algorithms exist for the vertex arrival setting~\cite{karp1990optimal,DJK13,gamlath2019online}), and it was conceivable that the online edge coloring problem might exhibit the same dichotomy~\cite{BGW21} (although in the fractional case, edge coloring is trivial while matching is not).
The main result of this paper shows a separation between these two problems, and makes the first progress towards resolving the conjecture of Bar-Noy, Motwani, and Naor.

\begin{theorem}\label{thm:main}
There is an online randomized algorithm that on a graph $G$ with maximum degree $\Delta = \omega(\log n)$ outputs an $\left(\frac{e}{e-1} + O\left(\frac{(\log \log \Delta)^2}{\log \Delta}+(\log n/\Delta)^{1/4}\right)\right)\Delta$-edge coloring with high probability in the oblivious adversary setting.
\end{theorem}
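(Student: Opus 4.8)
The plan is to fix a palette of $R=\bigl(\tfrac{e}{e-1}+\eps\bigr)\Delta$ colors and to analyze a natural randomized online rule: when an edge $e=(u,v)$ arrives, we pick a color for $e$ through a local randomized procedure among the colors that are still free at both $u$ and $v$. Each color class is thus a matching built online, and the algorithm succeeds exactly if every arriving edge finds a color whose matching is unsaturated at both of its endpoints; accordingly the whole task reduces to controlling, across the $R$ colors, the saturation pattern of these online matchings. We will aim to show the stronger statement that at the moment $e$ arrives the set of colors free at both $u$ and $v$ has size $\Omega(\Delta)$. Since $\Delta=\omega(\log n)$ and there are at most $n\Delta$ edges, it then suffices to prove that for every fixed edge $e$ and every fixed color $c$ the indicator that $c$ is free for $e$ has a well-controlled marginal, and that these indicators (over $c$, for fixed $e$) are weakly enough correlated that a Chernoff/Azuma-type bound makes the event ``$e$ fails'' have probability $n^{-\omega(1)}$; a union bound over edges then finishes.

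The heart of the argument is understanding, for a fixed pair $(e,c)$, the probability $p_{e,c}$ that $c$ is free for $e$. This event is determined by the algorithm's random choices on the earlier edges incident to $u$ and $v$; each such choice is in turn determined by the choices on \emph{its} earlier incident edges, and so on, so that unrolling the dependency produces a computation tree whose branching is bounded in terms of the degrees and whose depth we truncate at $L=\Theta(\log_{\Delta} n)$ --- at which scale, up to the advertised error, the truncated structure is genuinely treelike. This is where ``locally treelike'' and the $(\log n/\Delta)^{1/4}$ and $(\log\log\Delta)^2/\log\Delta$ losses enter, absorbing short cycles, high-degree exceptional vertices, and the truncation error. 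On an exact tree the marginals satisfy a belief-propagation recurrence $x\mapsto F(x_1,\dots,x_d)$ expressing the free-probability of a color at a vertex in terms of the free-probabilities one step closer to the leaves, together with the edge-level constraint that a uniformly chosen free color gets used. I would first solve for the symmetric fixed point of this recurrence and show that it is \emph{feasible} --- i.e.\ the implied number of colors free at a full-degree vertex, hence at an arriving edge, stays bounded away from $0$ --- precisely when $R/\Delta>\tfrac{e}{e-1}$. This is what forces the constant $\tfrac{e}{e-1}$, matching the threshold for one-sided vertex arrivals in \cite{CohenPW19}.

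The main technical obstacle is the correlation-decay step: showing that $F$ is a contraction in an appropriate potential on the feasible region, uniformly over all branching patterns allowed by a max-degree-$\Delta$ graph and over \emph{all} adversarial arrival orders. Given such a contraction, $p_{e,c}$ differs from the fixed-point value by $o(1)$ regardless of the (possibly adversarial) local structure, and --- crucially --- the influence of any single earlier random choice on $p_{e,c}$ decays geometrically in tree distance, which is exactly the input needed for concentration of $\lvert\{c:c\text{ free for }e\}\rvert$ about its mean. I expect that making the contraction hold all the way down to $\kappa=\tfrac{e}{e-1}+o(1)$, rather than some larger constant, is the delicate part, requiring a carefully chosen potential and a separate treatment of the ``boundary'' where a vertex is near full degree; combining it with the martingale bound over the $\Theta(\Delta)=\omega(\log n)$ nearly-independent availability indicators must also be done with care so the per-edge failure probability genuinely beats $1/(n\Delta)$. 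The remaining pieces --- the tree/near-tree comparison, the concentration inequality, and the final union bound --- are then routine given $\Delta=\omega(\log n)$.
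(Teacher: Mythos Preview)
Your proposal has a genuine gap: you never explain \emph{why} the dependency structure you unroll is treelike, and in general it is not. You write that at depth $L=\Theta(\log_\Delta n)$ ``the truncated structure is genuinely treelike,'' but the input graph $G$ is arbitrary of maximum degree $\Delta$ --- it could have girth $3$. In such a graph the $2$-neighborhood of an edge already contains cycles, and your belief-propagation recurrence $x\mapsto F(x_1,\dots,x_d)$ is simply not valid: the events ``$c$ is free at $u$'' and ``$c$ is free at $v$'' are correlated through those cycles, and no amount of truncation in $L$ removes them. The Weitz-style trick of unrolling into a self-avoiding-walk tree does not help either, because the recurrence you would need to analyze on that tree is not the clean $F$ you describe (the boundary conditions at repeated vertices are coupled).

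The paper's proof supplies exactly the missing idea: a \emph{subsampling} step. Before matching, each edge is kept independently with probability $\approx \Delta'/\Delta$ for a slowly growing $\Delta'$; in the subsampled graph the expected number of cycles of length at most $g$ through a given edge is at most $(\Delta')^g/\Delta$, which is $o(1)$ for $\Delta'$ and $g$ chosen suitably (this is \cref{lemma:highgirth}). Only \emph{after} this sparsification is the $g$-neighborhood of a typical edge literally a tree, and only then does the tree recurrence apply. The paper also avoids your cross-color concentration step entirely by first invoking the known reduction (\cref{lemma:matching_reduction}) from $(\alpha+o(1))\Delta$-edge-coloring to a single online matching problem in which every edge must be matched with probability $\ge 1/(\alpha\Delta)$; this reduces the whole analysis to two states (matched/unmatched), for which the crucial monotonicity lemma (\cref{lemma:obliviouschoice}) holds. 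The paper's concluding discussion explicitly flags that analyzing a multi-state coloring process directly, as you propose, runs into the absence of monotonicity and is an open problem.
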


Our proof of the theorem is based on reducing the problem to a matching problem on locally treelike graphs, and then applying a {\em tree recurrences} based approach for arguing {\em correlation decay}.
We believe that both our algorithm and its analysis are quite simple.
Correlation decay is a well known and widely used technique in the statistical physics and sampling literature \cite{Wei06, sly2008uniqueness, bandyopadhyay2008counting, li2013correlation, vigoda2000improved, anari2021spectral, srivastava2014counting}, but has not been applied for online problems.
Our work shows the efficacy of this technique in the analysis of online algorithms,  and we believe that it has potential for broader applicability in settings that need to cope with input uncertainty such as online, recourse, dynamic, or streaming problems.

\medskip

Before we proceed, we make some remarks regarding our algorithm. First, while na\"{i}vely implementing the algorithm as given in this paper yields a running time of $\tilde{O}(\Delta)$ per edge, using basic data structures and standard efficient sampling primitives allows one to implement the algorithm in time $\tilde{O}(1)$ per edge. Additionally, straightforward modifications of our algorithm show that our theorem extends for multigraphs with maximum edge multiplicity bounded by $o(\Delta)$. Finally, our techniques give an alternate proof (given in \cref{sec:randomorder}) that one can $(1+o(1))\Delta$-edge color a graph with maximum degree $\Delta$ if its edges arrive in a uniformly random order, which was the main result of \cite{BGW21}:

\begin{theorem}\label{thm:random}
There is an online randomized algorithm that on a graph $G$ with maximum degree $\Delta = \omega(\log n)$ and a uniformly random ordering of edge arrivals outputs a $(\Delta+o(\Delta))$-edge coloring with high probability.
\end{theorem}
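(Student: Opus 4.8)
The plan is to prove \cref{thm:random} by a \emph{nibble} argument in which the per-phase bookkeeping is handed off to the same tree-recurrence / correlation-decay machinery that powers \cref{thm:main}; this is the ``alternate proof'' of the result of \cite{BGW21} promised above, and I expect it to be technically lighter because the random arrival order does for free the work that the more delicate recurrence behind \cref{thm:main} must extract against an adversary. Fix slowly vanishing parameters $\delta \le \eps = o(1)$ with $\delta\Delta = \omega(\log n)$ --- possible since $\Delta = \omega(\log n)$, e.g.\ $\delta = (\log n/\Delta)^{1/2}$ and $\eps = \delta^{1/2} = (\log n/\Delta)^{1/4}$. Reserve a palette $P$ of $\lceil (1+\eps)\Delta\rceil$ colors together with a disjoint set of spare colors used only at the very end. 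Give each edge an i.i.d.\ uniform arrival time $x_e \in [0,1]$; since the stream order is uniformly random this is without loss of generality, and it splits the stream into $\approx 1/\delta$ consecutive \emph{windows} $W_j = \{e : x_e \in [j\delta,(j{+}1)\delta)\}$.

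The algorithm processes the windows in order, maintaining inductively that the partial coloring is \emph{near-perfect}: after window $W_{j-1}$ every vertex $v$ has used $(1\pm o(1))\,j\delta\Delta$ colors, so its free palette $F_v \subseteq P$ has size $(1+\eps-j\delta+o(1))\Delta$; write $\beta_j := 1+\eps-j\delta$, which stays in $[\eps/2,\,1+\eps]$ across the windows. Window $W_j$ is handled by a single randomized step: each edge $e = (u,v) \in W_j$ independently picks a uniformly random \emph{tentative} color from $F_u \cap F_v$; $e$ is \emph{confirmed}, and permanently keeps that color, if no other edge of $W_j$ incident to $u$ or to $v$ picked the same tentative color, and is \emph{deferred} otherwise. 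After the last window the deferred edges form a graph $H$, which is colored greedily with $2\Delta(H)$ spare colors. (A one-shot version --- a single window $W_0 = [0,1]$ --- already colors a constant fraction of the edges at each vertex but leaves $\Theta(\Delta)$ of them deferred, which is exactly why the windowing is essential.)

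The crux is the analysis of a single window, and this is where the paper's machinery enters. By the near-perfect invariant each vertex has $m := (1\pm o(1))\delta\Delta$ edges in $W_j$, while $|F_u\cap F_v| = \Omega(\beta_j^2\Delta) \gg \log n$ because the free palettes of adjacent vertices behave like pseudorandom subsets of $P$ (which the correlation-decay analysis also supplies); consequently any fixed color of $F_u$ is the tentative choice of a given incident window-edge with probability $(1+o(1))/(\beta_j\Delta)$, so $\Pr[e\text{ deferred}] \le (1+o(1))\,2m/(\beta_j\Delta) = O(\delta/\beta_j)$ and $\mathbb{E}[\deg_H(v)\text{ gained in } W_j] = O(\delta^2\Delta/\beta_j)$. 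To upgrade these expectations to high-probability statements --- and to propagate the invariant, i.e.\ concentration of $|F_v|$ --- we reduce, one color $c$ at a time, to a matching problem on the graph $G^{(c)}_j$ consisting of the window-$W_j$ edges whose tentative color is $c$: here every vertex has only $O(\delta/\beta_j) = o(1)$ incident edges in expectation, so $G^{(c)}_j$ is locally treelike with high probability, and ``$v$ receives color $c$ in $W_j$'' is the local (distance-$\le 2$) event that $v$ lies on an isolated edge of $G^{(c)}_j$. The tree recurrence then pins down $\Pr[v\text{ receives } c\text{ in } W_j]$, and correlation decay across $G^{(c)}_j$ (and across the colors) yields the concentration of $\deg_H(v)$ and of $|F_v|$. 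Summing the per-window deficits with $\beta_j = 1+\eps - j\delta$ decreasing down to $\Omega(\eps)$,
\[ \deg_H(v) \;\le\; O\!\Big(\sum_j \delta\Delta\cdot\tfrac{\delta}{\beta_j}\Big) \;=\; O\!\Big(\delta\Delta\!\int_0^1\!\frac{dt}{1+\eps-t}\Big) \;=\; O\big(\delta\Delta\ln(1/\eps)\big) \;=\; o(\Delta) \]
with high probability after a union bound over $v$. Hence the algorithm uses $|P| + 2\Delta(H) = \big(1+\eps + O(\delta\ln(1/\eps))\big)\Delta = (1+o(1))\Delta$ colors, proving \cref{thm:random}.

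I expect the one-window analysis to be the main obstacle: showing that the free palettes $F_v$ stay pseudorandom and that $\deg_H(v)$ and $|F_v|$ concentrate despite the dependencies accumulated over earlier windows. That is precisely what forces the reduction to a treelike matching problem and the correlation-decay estimate, and it is a streamlined cousin of the argument behind \cref{thm:main}; the remaining ingredients --- the union bounds, the admissible choice of $\eps,\delta$, and the greedy cleanup of $H$ --- are routine.
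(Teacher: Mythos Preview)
Your proposal takes a genuinely different route from the paper. The paper does \emph{not} use a nibble/palette argument. Instead it (i) randomly splits $G$ into $T \approx \Delta/\Delta'$ subgraphs of maximum degree $\approx \Delta'$ (\cref{algo:subsample-random}), (ii) on each subgraph iterates the same matching procedure \cref{algo:matching} to color it (\cref{algo:random-coloring}), and (iii) greedily finishes the leftovers. The key structural insight is that under a uniformly random order the \emph{witness edges} of an edge $e$ --- those that can influence whether $e$ is matched --- almost surely form a tree, because a length-$\ell$ path ending at $e$ is monotone in arrival time with probability only $1/\ell!$, so distant edges are never witnesses. On a genuine tree, \cref{algo:matching} matches each edge with probability \emph{exactly} $1/C$ (\cref{lem:tree-exact-probability}); no contraction rate and no $e/(e-1)$ threshold enter. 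This is strictly simpler than the adversarial analysis and is precisely the ``deeper structural reason'' alluded to after the theorem statement.

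Your nibble scheme is closer in spirit to the proof in \cite{BGW21} that the paper is offering an alternative to, and more importantly it has a real gap: as described it is \emph{not online}. You confirm $e\in W_j$ only if no other edge of $W_j$ at either endpoint chose the same tentative color, but that depends on edges arriving \emph{after} $e$ in the same window, so you cannot commit to a color for $e$ at arrival. One can try a repair (an arriving edge keeps its tentative color unless some \emph{earlier} window-edge at an endpoint already has it, and is otherwise immediately sent to $H$ and greedily spare-colored), but then the deferral probabilities and the concentration arguments have to be redone. Separately, the appeal to the paper's ``tree-recurrence/correlation-decay machinery'' is not substantiated: that machinery analyzes \cref{algo:matching}, which you never run; the events you need (isolated edges in $G^{(c)}_j$, pseudorandomness of $F_u\cap F_v$ across windows) require their own arguments, and in particular $|F_u\cap F_v|=\Omega(\beta_j^2\Delta)$ is nontrivial once the palettes of adjacent vertices have become correlated through earlier windows.
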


The $o(\Delta)$ term can be quantified to be $O(\Delta (\Delta')^{-1/24}+\sqrt{\Delta\Delta'\log n})$, where we set 
\[ \Delta' = \min(\log \Delta/\log\log\Delta,\sqrt{\Delta/\log n}). \] While this is quantitatively worse than the $O((\Delta/\log n)^{-c})$ dependence in \cite{BGW21}, we believe that our new proof further demonstrates the utility of our subsampling and tree recurrences based approach. More importantly, it reveals a deeper structural reason why a $(1+o(1))\Delta$-edge coloring is easier to achieve in the random-order model -- the edges which determine whether an edge $e$ is matched in a subsampled graph (under random-order edge arrivals) form a tree. This tree case is substantially simpler, and should become clear to the reader in our overview \cref{sec:techniques}. 

\subsection*{Outline}
We give an intuitive version of our entire argument arc in \cref{sec:techniques}.
We present the history of the problem in \cref{sec:related}.
\cref{sec:reduction,sec:ouralgo} comprise the formal proof of our result  (\cref{thm:main}).
We conclude and discuss future directions in \cref{sec:conclusions}.
\cref{sec:randomorder} contains the proof for the random-order case (\cref{thm:random}).

\section{Our Techniques and Proof Overview}
\label{sec:techniques}
We start by introducing some background and then give an overview of our proof. 
This section is not a prerequisite for the full proof (\cref{sec:reduction,sec:ouralgo}),
and
readers who prefer to see our techniques in full detail can skip it;
indeed, our proof is quite short.

Recall that in any edge coloring,
the set of edges of a single color forms a matching.
In the reverse direction, a natural reduction due to Cohen, Peng, and Wajc~\cite{CohenPW19}
shows that an $(\alpha + o(1))\Delta$-coloring can be achieved
by repeatedly invoking a matching algorithm that matches each edge with probability at least $1/(\alpha \Delta)$
(and assigning a new color to the matching edges, then removing them from the graph).
A vertex of degree $\Delta$ will be matched with probability roughly $1/\alpha$, so the maximum degree decreases at a rate of roughly $1$ per $\alpha$ iterations.
Having $\Delta = \omega(\log n)$ yields enough concentration for this process to finish in $(\alpha + o(1))\Delta$ iterations. This is the only point where our algorithm (and the previous works \cite{CohenPW19,SW21}) uses $\Delta = \omega(\log n)$. Saberi and Wajc~\cite{SW21} observed that this reduction works for any arrival model.
Therefore we are left with the task of designing an algorithm
that matches every edge with probability at least $(1-o(1))/(\frac{e}{e-1}\Delta)$, with edges arriving online in any order against an oblivious adversary.

Online algorithms that match each edge with probability at least $1/(\alpha \Delta)$ for $\alpha < 2$ are known for
bipartite graphs under {\em vertex} arrivals; see \cite{cohen2018randomized} for an example.
The positive results~\cite{CohenPW19, SW21} for the online edge coloring problem build upon and extend these results.
Furthermore, these algorithms require many new ideas over \cite{cohen2018randomized}, including a novel LP relaxation and sophisticated online rounding schemes.
Unfortunately, it appears that these techniques rely critically on the vertex arrival model and do not generalize to edge arrivals.
Indeed, Saberi and Wajc \cite{SW21} write that the edge arrival setting "remains out of reach'' with the known techniques.
The main technical contribution of this paper is a simple argument based on correlation decay to overcome the limitations of the previous works.

\subsection*{Reduction to locally treelike instances by subsampling.}
To solve the online matching problem,
we begin by subsampling each edge of the graph with probability roughly $\Delta' / \Delta$ for some $\Delta' = \omega(1)$.
This will decrease the maximum degree to roughly $\Delta'$, and we in fact trim any edges above that degree threshold to ensure this.
More crucially for our arguments, the subsampled graph will be \emph{locally treelike}:
for most edges $e$, their close neighborhood will contain no cycles, and hence form a tree.

To see this, let us note that for any $\ell$,
the number of length-$\ell$ cycles containing $e$ \emph{in the original graph}
is at most $\Delta^{\ell - 2}$ (start walking from one endpoint of $e$ and make $\ell-1$ choices of neighbor, where the last one is forced to end up at the other endpoint of $e$).
However, conditioned on $e$ being in the sparsified graph, each of them survives in the sparsified graph with probability only at most $(\Delta' / \Delta)^{\ell-1}$.
Thus, the expected number of cycles of length up to some threshold $g$ is at most
$\sum_{\ell = 3}^g (\Delta')^{\ell-1} / \Delta \le  (\Delta')^g / \Delta$,
which is $o(1)$ if $(\Delta')^g = o(\log n)$ -- the reader should imagine that $\Delta', g = \omega(1)$ are arbitrarily slowly growing functions of $n$.
We can thus imagine that we delete any edge that would cause a short cycle to appear.\footnote{
We could afford to do so in the algorithm,
but for simplicity we instead argue in the analysis that for any edge $e$, there is only a small probability that there is a short cycle in the $g$-neighborhood of $e$ (not necessarily containing $e$).
}

Now, we want to match every surviving edge with probability at least $1/C$ for $C \approx \frac{e}{e-1} \Delta'$.
Then the probability that an edge survives and gets matched is at least
$(1-o(1))(\Delta'/\Delta) \cdot 1 / (\frac{e}{e-1} \Delta') = (1-o(1)) / (\frac{e}{e-1} \Delta)$, as desired by the edge-coloring-to-matching reduction.

\subsection*{Online matching on trees.}
Let us first see how to get a good algorithm for online matching in an ideal scenario where the (subsampled) graph is in fact a tree.
We will guarantee that every edge is matched with probability $1/C$.
We now describe what to do when an edge $(u, v)$ arrives.
Of course, we can only match it if $u$ and $v$ are not yet matched.
Let $d_u$ denote the number of edges already adjacent on $u$ (not counting $e$).
The probability that any of these edges is matched is $1/C$;
as these are disjoint events, the probability that $u$ is not yet matched is $1-\frac{d_u}{C}$.
Similarly, for $v$ it is $1-\frac{d_v}{C}$ where $d_v$ is $v$'s degree.
Crucially, as the graph is a tree,
these events are independent
(as $u$ and $v$ are in different connected components before the arrival of $e$).
Thus, if we match $e$ with probability $\frac{C}{(C-d_u)(C-d_v)}$ in the case $u$ and $v$ are unmatched,\footnote{This value is in $[0,1]$ as long as $C \ge \Delta' + \Omega(\sqrt{\Delta'})$.}
then we get the required overall probability of
\begin{equation}
\label{eq:matching_probability}
\left(1-\frac{d_v}{C} \right) \left(1-\frac{d_u}{C} \right) \frac{C}{(C-d_u)(C-d_v)} = \frac{1}{C} .
\end{equation}
Hence this algorithm inductively matches each edge with probability exactly $1/C$, as desired.

We remark that Cohen and Wajc~\cite{cohen2018randomized},
who give a $(1+o(1))$-competitive algorithm
for online matching in  regular graphs
{under one-sided bipartite vertex arrivals},
similarly sample edges $(u,v)$ adjacent to an arriving vertex $v$
with probability proportional to $\frac{C}{C-d_u}$,
so as to get marginal probability $\approx 1/C$ for each edge,
though their algorithm is more complex.

\subsection*{Online matching on locally treelike graphs.}
When the graph is not a tree,
the difficulty is that the above events of $u$ and $v$ being unmatched
are not independent.
Nevertheless, we argue that in the absence of short cycles,
these events are not very correlated.
In fact, our algorithm is the same as in the tree case.

\begin{figure}[t]
    \centering

    \begin{tikzpicture}[scale=1.1]
    
    \tikzstyle{vertex}=[circle, draw,fill=gray!60, inner sep=0pt, minimum size=12pt]
    \tikzstyle{ignoredvertex}=[circle, draw,fill=red!10, inner sep=0pt, minimum size=12pt]
    \newcommand{\myheight}{1.5}
    \newcommand{\normalthick}{1.5pt}
    \newcommand{\morethick}{3.5pt}
    
    \node[vertex] (v1) at (-2.5,3.5) {$u$};
    \node[vertex] (v2) at (0,3.5) {$v$};
    \draw  (v1) edge[line width=\normalthick] node[below] {21} node[above] {\large $e$} (v2);
    \node[vertex] (v4) at (-4,3.5-\myheight) {};
    \node[ignoredvertex] (v3) at (-3,3.5-\myheight) {};
    \node[vertex] (v8) at (0.5,3.5-\myheight) {};
    \node[vertex] (v9) at (1.5,3.5-\myheight) {};
    \node[ignoredvertex] (v7) at (-3,3.5-2*\myheight) {};
    \node[vertex] (v6) at (-4,3.5-2*\myheight) {};
    \node[vertex] (v5) at (-5,3.5-2*\myheight) {};
    \node[ignoredvertex] (v12) at (0,3.5-2*\myheight) {};
    \node[vertex] (v10) at (1,3.5-2*\myheight) {};
    \node[vertex] (v11) at (2,3.5-2*\myheight) {};
    \draw  (v1) edge[dashed,red,line width=\normalthick] node[right] {22} (v3);
    \draw  (v1) edge[line width=\normalthick] node[left] {19} (v4);
    \draw  (v4) edge[line width=\normalthick] node[left] {14} (v5);
    \draw  (v6) edge[line width=\normalthick] node[right] {12} (v4);
    \draw  (v3) edge[dotted,red,line width=\normalthick] node[right] {3} (v7);
    \draw  (v2) edge[line width=\normalthick] node[left] {11} (v8);
    \draw  (v2) edge[line width=\normalthick] node[right] {20} (v9);
    \draw  (v9) edge[line width=\normalthick] node[left] {17} (v10);
    \draw  (v9) edge[line width=\normalthick] node[right] {10} (v11);
    \draw  (v8) edge[dashed,red,line width=\normalthick] node[left] {15} (v12);
    \node[vertex] (v21) at (-4,3.5-3*\myheight) {};
    \node[vertex] (v18) at (-5.5,3.5-3*\myheight) {};
    \node[vertex] (v20) at (-6,3.5-4*\myheight) {};
    \node[vertex] (v19) at (-5,3.5-4*\myheight) {};
    \node[vertex] (v22) at (-4,3.5-4*\myheight) {};
    \node[ignoredvertex] (v23) at (-3,3.5-4*\myheight) {};
    \node[vertex] (v15) at (1,3.5-3*\myheight) {};
    \node[vertex] (v16) at (1,3.5-4*\myheight) {};
    \node[ignoredvertex] (v13) at (0,3.5-3*\myheight) {};
    \node[ignoredvertex] (v14) at (-1,3.5-3*\myheight) {};
    \node[ignoredvertex] (v17) at (0,3.5-4*\myheight) {};
    \draw  (v12) edge[dashed,red,line width=\normalthick] node[right] {18} (v13);
    \draw  (v12) edge[dotted,red,line width=\normalthick] node[left] {1} (v14);
    \draw  (v10) edge[line width=\normalthick] node[right] {8} (v15);
    \draw  (v15) edge[blue,line width=\morethick] node[right] {4} (v16);
    \draw  (v13) edge[dotted,red,line width=\normalthick] node[left] {6} (v17);
    \draw  (v5) edge[line width=\normalthick] node[left] {13} (v18);
    \draw  (v18) edge[line width=\morethick,blue] node[right] {9} (v19);
    \draw  (v18) edge[line width=\morethick,blue] node[right] {2} (v20);
    \draw  (v21) edge[line width=\normalthick] node[right] {7} (v6);
    \draw  (v21) edge[line width=\morethick,blue] node[left] {5} (v22);
    \draw  (v21) edge[dashed,red,line width=\normalthick] node[right] {16} (v23);
    \draw[fill=gray, fill opacity=0.15]  plot[smooth cycle, tension=.7] coordinates {(-3.9,3.3) (-5.8,0.7) (-6.5,-2.5) (-3.8,-3.1) (-3,2.8) (0.4,1.6) (0.6,-3) (2.2,-1) (1.9,2.6)  (-0.5,4)};
    \node at (-5.3,2.7) {$W(T)$};
    \draw [decorate,decoration={brace,amplitude=10pt,mirror,raise=4pt}] (2.4,-2.7) -- (2.4,-0.9) node [midway,xshift=0.8cm] {$\partial T$};
    \end{tikzpicture}
    \caption{
    An edge $e = (u,v)$ together with its neighborhood, which contains no cycles.
    The numbers on edges specify the order of arrival (with 1 coming first).
    The red (dashed/dotted) edges can be ignored, as they are not part of the witness tree $W(T)$ (\cref{def:witnesstree}): dashed edges arrive after their parent-edge, and dotted edges belong to subtrees of dashed edges.
    The blue (thick) edges are boundary edges, which the adversary controls.
    We show that the adversary can decide them deterministically upfront, and moreover this decision should be to leave them unmatched.
    Thus we reduce our analysis to the case where only the black (non-boundary) edges arrive, and are matched randomly as in our \cref{algo:matching}.
    Note that then, the events of matching $u$ and $v$ (before $e$ arrives) are independent.
    }
    \label{fig:tree}
\end{figure}
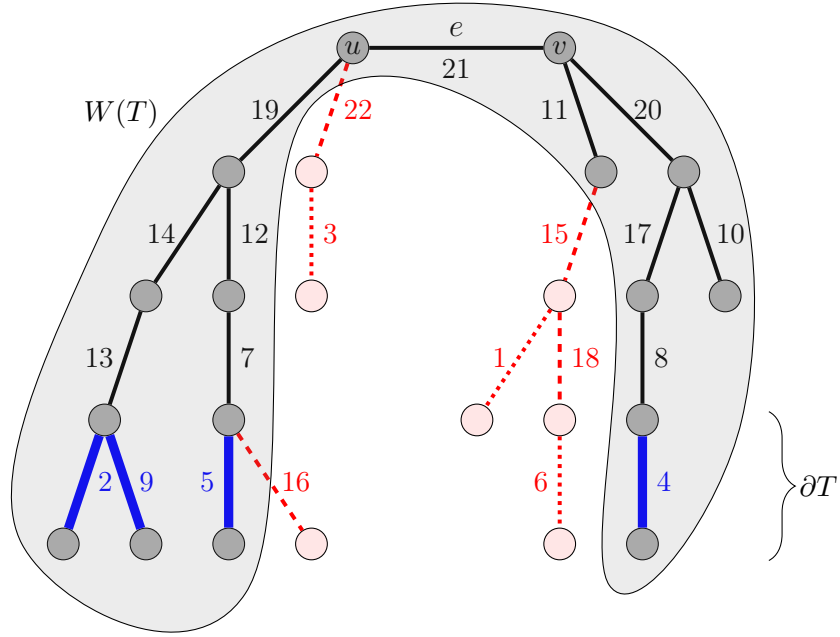

Assume that edge $e=(u,v)$ has no cycles in its neighborhood of radius $g$,
which we denote by $T$, as it is a tree (see \cref{fig:tree} for an example).
Intuitively, any correlation between the above two events is due to some $u$-$v$-path that must pass over the boundary of $T$;
we will show \emph{correlation decay} as we go up the tall-enough tree.

\subsection*{Edge matching game.}
We lower-bound the probability that $e$ is matched
by considering a worst-case scenario
where we cede control over all the boundary edges of $T$
to an adversary,
whose objective is to minimize the probability that $e$ is matched.
In this \emph{edge matching game} played on $T$,
the powerful adversary is allowed to match or not match any arriving boundary edge of $T$; his decisions may depend on the partial matching built up to that point and may be randomized.
All non-boundary edges are matched randomly as in our algorithm.

We may assume without loss of generality that
every edge in $T$ arrives before its parent edge.
Otherwise we can ignore this edge (together with its entire subtree -- see \cref{fig:tree}),
as it only influences the probability of matching $e$ via the boundary,
which the adversary already controls anyway.
In other words, we are in a setting where edges arrive bottom-to-top
(though some boundary edges may arrive after non-boundary edges that are not their ancestors).

\subsection*{Monotonicity.}
What should the adversary do?
Intuitively, to minimize the probability of matching $e$,
the adversary should maximize the probability of matching edges at distance $1$ from $e$. To that end, they should minimize distance-$2$ edges, maximize distance-$3$ edges,
and so on.
We set $g$ to be odd;
then intuitively, the adversary should leave the boundary (distance-$(g+1)$) edges unmatched.
Indeed, we can show a \emph{monotonicity} property:
the lower the probability with which they match a boundary edge,
the lower the probability of matching $e$, even under adaptive decisions.
In particular, this implies that decisions for all boundary edges
can be deterministically fixed at the beginning.
Thus the adversary is effectively eliminated,
and the events of $u$ and $v$ being unmatched (before $e$ arrives) are again independent as in the tree case.

This use of monotonicity to reduce to the ``all unmatched'' case is closely related to reductions done by Weitz \cite{Wei06} when establishing strong spatial mixing for sampling independent sets in the hardcore model in graphs of maximum degree $\Delta$.

\subsection*{Tree recurrences and error decay.}
In our setting, the matching probabilities for vertices at the bottom of $T$ are out of our control.
However, we show that as we go up the tree,
they quickly contract towards our desired value of $1/C$-per-edge.

Define $q_w$ to be the probability that a vertex $w \in T$ is not matched from below.
This happens if every child-edge $(w,w_i)$ of $w$ is not matched;
that is, each $w_i$ was matched from below or the edge $(w,w_i)$ was not sampled.
If $(w,w_i)$ are ordered by time of arrival, we have\footnote{For simplicity, in this introduction we imagine that every non-leaf in $T$ has $\Delta'$ children.}
\begin{align*}
q_w &= \prod_{i=1}^{\Delta'} \mb{P}\left[ \text{$(w,w_i)$ not matched} \mid \text{$(w,w_1), ..., (w,w_{i-1})$ not matched} \right] \\
&= \prod_{i=1}^{\Delta'} \left( 1 - q_{w_i} \cdot \frac{C}{(C-i+1)(C-\Delta')} \right) \,.
\end{align*}
Ideally we would like $q_w$ to be $1 - \frac{\Delta'}{C} = \frac{C-\Delta'}{C}$
(disjoint events of the $\Delta'$ children being matched, each with probability $1/C$),
so let us define the \emph{error} as $\eps_w := 1 - \frac{C}{C-\Delta'} \cdot q_w$.
Some rewriting (see \cref{sec:computing} for more details) then yields the recurrence on error terms:
\begin{align}
    \eps_w &= 1 - \prod_{i=1}^{\Delta'} \left( 1 + \frac{1}{C-i} \cdot \eps_{w_i} \right) \,. \label{eq:upwards}
\end{align}
\newcommand{\epsold}{\ensuremath{\eps_{\mathrm{old}}}}
Imagine that all errors on the level below $w$ are equal: $\epsold := \eps_{w_1} = ... = \eps_{w_{\Delta'}}$. Using $\exp(x) \approx 1+x$ we get:
\begin{align*}
    \eps_w &\approx 1 - \exp \left( \sum_{i=1}^{\Delta'} \frac{1}{C-i} \cdot \epsold \right) \approx 1 - \exp \left( \log\left(\frac{C}{C-\Delta'}\right) \cdot \epsold \right) \approx - \log\left(\frac{C}{C-\Delta'}\right) \cdot \epsold \,.
\end{align*}
Essentially, as we move one level up $T$, the sign of the error flips, and its absolute value is multiplied by $\log\left(\frac{C}{C-\Delta'}\right)$. This multiplier is smaller than $1$ if $C > \frac{e}{e-1} \cdot \Delta'$ -- and this is what gives rise to our competitive ratio.
Then, the errors shrink towards $0$ as we move up the tree $T$.
If the height $g$ of $T$ is made large enough, then
$q_u$ and $q_v$ are very close to $1-\frac{\Delta'}{C}$;
recall that the corresponding events are now independent
(as we have made the boundary deterministic)
and so the probability of matching $e$ is close to $1/C$, as in \eqref{eq:matching_probability}.

This analysis of ``tree recurrences'' is ubiquitous in statistical physics, and our threshold of $C > \frac{e}{e-1}\Delta'$ can be recast in this language as the threshold for having ``uniqueness of the Gibbs measure on the $\Delta'$-ary tree''. More concretely, for $C<\frac{e}{e-1}\Delta'$, the function $f(x) = 1-\exp(x\log\frac{C}{C-\Delta'})$ has nonzero fixed points of order $2$, i.e. $f(f(x)) = x$ for some nonzero $x$, and hence there exist probabilities $p_1 \ne p_2$ such that all edges on even levels are matched with probability $p_1$ and edges on odd levels are matched with probability $p_2$, and yet these probabilities satisfy the necessary recurrence equations. This ``alternate'' fixed state is precisely responsible for the failure of our analysis for $C<\frac{e}{e-1}\Delta'$. 

\section{A Brief History of the Problem}
\label{sec:related}
Prior to our work, there were no known online algorithms for the edge coloring problem in the adversarial edge arrival model, besides the trivial $2\Delta -1$ bound obtained by the greedy algorithm \cite{BMN92}.
All the positive results for the problem are in the setting $\Delta = \omega(\log n)$ and fall into two categories.

\subsection*{Random Arrival of Edges}
Aggarwal, Motwani, Shah, and Zhu \cite{aggarwal2003switch} were the first to show that for very dense multi-graphs with $\Delta = \omega(n^2)$ one can get a near-optimal $(1 + o(1))\Delta$-edge coloring algorithm.
For simple graphs, Bahmani, Mehta, and Motwani~\cite{bahmani2012online} gave a $1.26\Delta$-edge coloring algorithm when $\Delta = \omega(\log n)$. 
Very recently, Bhattacharya, Grandoni, and Wajc~\cite{BGW21} showed that one can get the best of both these results by presenting a $(1 + o(1))\Delta$-edge coloring algorithm for simple graphs with $\Delta = \omega(\log n)$ using an adaptation of the Nibble method.

\subsection*{Adversarial Vertex Arrival Model}
In this model, instead of edges being revealed one by one, vertices of the graph are revealed one at a time, together with adjacent edges to previously revealed vertices.
Cohen, Peng, and Wajc~\cite{CohenPW19} designed an asymptotically optimal $(1 + o(1))\Delta$-edge coloring algorithm for bipartite graphs under one-sided vertex arrivals; that is, the left side of the bipartite graph is fixed and the right vertices arrive in an online fashion.
For general vertex arrivals, Saberi and Wajc \cite{SW21} very recently designed a $(1.9 + o(1))$-competitive randomized algorithm.
This result also applies for general graphs, as there is an online reduction from general graphs to bipartite graphs that works against oblivious adversaries.

\medskip
Cohen, Peng, and Wajc~\cite{CohenPW19} also studied edge coloring when the maximum degree $\Delta$ of the graph is not known. 
For this problem, they showed a lower bound of $e/(e-1)$ for bipartite graphs, even in the setting of one-sided vertex arrivals.
This is in contrast to the known-$\Delta$ case, where the only known lower bound, even for deterministic algorithms under edge arrivals, is an additive error of $\Omega(\sqrt \Delta)$, which follows from a direct reduction of the lower bound for the matching problem in \cite{cohen2018randomized}.

Finally, similarly to \cite{CohenPW19, SW21}, our approach to the online edge coloring problem is via a reduction to the online matching problem, which has been studied extensively for several decades in various arrival models.
We refer the readers to \cite{karp1990optimal, mehta2007adwords, mehta2013online, gamlath2019online, fahrbach2020edge} for an introduction to this literature.

\section{Reduction to Matching on Locally treelike Graphs}
\label{sec:reduction}

Our arguments begin with a standard reduction of online edge coloring to rounding fractional matchings. This is essentially the statement of \cite[Lemma 2.2]{SW21} and exactly the same proof applies.
\begin{lemma}
\label{lemma:matching_reduction}
Let $\mc{A}$ be an online matching algorithm and $\alpha \ge 1$ a parameter such that, given any graph of maximum degree $\widehat{\Delta} = \Omega(\log n)$ as a sequence of edge arrivals, $\mc{A}$ matches each edge with probability at least $1/(\alpha\widehat{\Delta})$. Then there exists an online edge coloring algorithm $\mc{A}'$ that on any graph $G$ with maximum degree $\Delta = \omega(\log n)$ arriving online outputs an $\left(\alpha + O((\log n/\Delta)^{1/4})\right) \Delta$ edge coloring with high probability.
\end{lemma}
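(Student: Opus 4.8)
The plan is to follow the standard reduction from online edge coloring to online matching (this is essentially \cite[Lemma 2.2]{SW21}): $\mc{A}'$ runs a long sequence of copies $\mc{A}_1, \mc{A}_2, \dots$ of $\mc{A}$, offering each arriving edge $e$ first to $\mc{A}_1$, then (if it is not matched) to $\mc{A}_2$, and so on; the first instance $\mc{A}_i$ that matches $e$ assigns it the color $i$. Thus $\mc{A}_i$ is fed exactly the substream $H_i$ of edges refused by all earlier instances, in their original arrival order, and since each color class is by construction a matching, the resulting coloring is proper; it only remains to bound the number of colors.

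Write $d_i(v)$ for the degree of $v$ in the residual graph $H_i$ and $D_i := \max_v d_i(v)$; both are non-increasing in $i$. The key point is that, since the adversary is \emph{oblivious}, $H_i$ (hence $D_i$) depends only on the randomness of $\mc{A}_1, \dots, \mc{A}_{i-1}$, i.e.\ is measurable with respect to the past $\mc{F}_{i-1}$; so $\mc{A}$'s guarantee applies to $H_i$ conditionally, and as long as $D_i = \Omega(\log n)$ every $e \in H_i$ is matched by $\mc{A}_i$ with conditional probability at least $1/(\alpha D_i)$. Summing these disjoint events over the $d_i(v)$ edges at a vertex $v$ gives the drift inequality
\[ \mb{E}\big[d_{i+1}(v) \,\big|\, \mc{F}_{i-1}\big] \;\le\; d_i(v)\Big(1-\tfrac{1}{\alpha D_i}\Big), \]
so ignoring fluctuations $D_i$ descends along the curve $D_i \approx \Delta - i/\alpha$, reaching $o(\Delta)$ after $(\alpha + o(1))\Delta$ steps, with every $d_i(v)$ tracking a rescaled copy. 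To turn this into a high-probability statement I would maintain, by induction over $i$, that $D_i$ stays close to its ideal trajectory: in each step one applies a martingale tail bound (Freedman's inequality, say) to the per-vertex degree decreases and union-bounds over all $n$ vertices — the only place $\Delta = \omega(\log n)$ enters — and one must keep a buffer between $D_i$ and the ideal value that is large enough to absorb the accumulated fluctuations. Organizing the run into epochs of geometrically decreasing degree scale and optimizing the buffer sizes, one finds that a relative buffer of order $(\log n/\Delta)^{1/4}$ suffices, so that $N = \big(\alpha + O((\log n/\Delta)^{1/4})\big)\Delta$ instances drive the residual maximum degree down to $o(\Delta)$ with high probability. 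Coloring the residual graph $H_{N+1}$ online with a reserved greedy palette of $o(\Delta)$ further colors (which is also what one does if $D_i$ drops below the $\Theta(\log n)$ threshold where $\mc{A}$'s hypothesis applies) completes the coloring within the claimed $\big(\alpha + O((\log n/\Delta)^{1/4})\big)\Delta$ colors.

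The hard part is exactly this concentration/bootstrapping step. The per-round matching guarantee is stated in terms of the \emph{current} residual maximum degree $D_i$ — itself the random quantity one is trying to drive down — so there is no clean product estimate to quote; instead the inductive hypothesis ``$D_i$ is close to its ideal trajectory'' must be carried with enough slack at every degree scale to absorb the martingale fluctuations, and the loss incurred by this bootstrapping is what forces the $O((\log n/\Delta)^{1/4})$ overhead rather than something smaller. The remaining ingredients — the chaining of instances, the use of obliviousness to make $H_i$ measurable with respect to the past, and the greedy cleanup — are routine; and since the statement coincides up to cosmetic changes with \cite[Lemma 2.2]{SW21}, its proof can be imported essentially verbatim.
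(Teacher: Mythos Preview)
Your proposal is correct and matches the paper's treatment: the paper does not prove this lemma at all, but simply states that it ``is essentially the statement of \cite[Lemma 2.2]{SW21} and exactly the same proof applies.'' Your sketch of the chained-instances reduction, the obliviousness argument making $H_i$ measurable with respect to $\mc{F}_{i-1}$, the per-vertex drift/concentration, and the greedy cleanup is exactly that proof; your closing sentence already acknowledges this.

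One small point worth tightening if you write this out in full: since $\mc{A}$ must be told its degree bound $\widehat{\Delta}$ in advance, instance $\mc{A}_i$ cannot literally be handed the random quantity $D_i$; rather one fixes a deterministic schedule $\widehat{\Delta}_1 \ge \widehat{\Delta}_2 \ge \cdots$ (your ``epochs of geometrically decreasing degree scale'') and the inductive concentration argument establishes $D_i \le \widehat{\Delta}_i$ with high probability. You already gesture at this with the ``buffer'' language, but making the schedule explicit is what lets $\mc{A}$'s hypothesis apply cleanly at each stage.
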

We remark that $\mc{A}$ and $\mc{A'}$ need advance knowledge of $\widehat{\Delta}$ and $\Delta$, respectively.

In this work, we apply a further reduction to the online matching problem so that it suffices to consider graphs which are \emph{locally treelike}. More precisely, let the $g$-neighborhood of an edge $e$ in a graph $G'$ denote the set of vertices within distance $g$ of either endpoint. We give a reduction from the graph $G$ to graphs $G'$ which have maximum degree at most $\Delta'$ and we ensure that the $g$-neighborhoods of almost all edges $e \in G'$ have no cycles.
The reader should think of $\Delta' = \omega(1)$, an arbitrarily slowly growing function.

\begin{algorithm}[!ht]
\caption{Computes a subgraph of $G$ online with each edge included with approximately the same probability and with few short cycles. \label{algo:subsample}}
\SetKwProg{Proc}{procedure}{}{}
\Proc{$\textsc{Subsample}(G, \Delta',\Delta)$}{
    $G' \assign \emptyset.$ \Comment{Initially empty subgraph $G'$.} \\
    $d_v \assign 0$ for $v \in V(G).$ \Comment{Number of adjacent sampled edges to vertex $v$} \\
    $\eta = 3\sqrt{(\log \Delta')/\Delta'}$ \\
    \For{$i = 1, \dots, m$}{
        \tcp{Edge $e_i = (u, v)$ arrives}
        $R \assign \tsc{unif}([0,1])$ \\
        \If{$R\le (1-\eta)\Delta'/\Delta$ \label{line:subsample}}{
            \If{$d_u < \Delta'$ \emph{and} $d_v < \Delta'$ \label{line:restrictdegree}}{
                $E(G') \assign E(G') \cup \{ e \}$ \\
            }
            $d_u \assign d_u + 1, d_v \assign d_v + 1.$
        }
    }
}
\end{algorithm}

\begin{lemma}[Uniformly-sampled graphs are locally treelike]
\label{lemma:highgirth}
Let $G$ be a graph with maximum degree $\Delta$. Let $G'$ be a subgraph of $G$ where each edge is included with probability $p = D/\Delta$, and $D \ge 2$. The probability that the $g$-neighborhood of an edge $e \in G'$ contains a cycle is at most $3D^{5g}/\Delta$.
\end{lemma}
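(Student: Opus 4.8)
The plan is to union-bound over all potential short cycles in the $g$-neighborhood of $e$, conditioning on $e\in G'$. Fix $e=(u,v)\in G'$. A cycle $\gamma$ in the $g$-neighborhood of $e$ has some length $\ell\ge 3$, and every vertex of $\gamma$ is within distance $g$ of $u$ or $v$; in particular $\gamma$ is connected to $\{u,v\}$ by a path of length at most $g$ inside $G$. So the relevant combinatorial object is: a cycle of length $\ell$ together with a connecting path (of length at most $g$) from the cycle to an endpoint of $e$. I will count, for each $\ell\in\{3,\dots,2g+1\}$ (the longest cycle that can fit in the $g$-neighborhood of an edge has length at most $2g+1$), the number of such configurations in the original graph $G$, and then multiply by the probability that all its edges (other than $e$ itself, which is already present) survive the independent sampling.

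\medskip

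\textbf{Counting configurations.} First I would bound the number of cycles of length $\ell$ whose vertex set lies in the $g$-neighborhood of $e$. Such a cycle can be reached from $u$ or $v$ by a walk of length at most $g$; concatenating that walk (at most $\Delta^{g}$ choices, starting from one of $u,v$: at most $2\Delta^{g}$) with a traversal of the cycle starting at the reached vertex (at most $\Delta^{\ell-1}$ further choices, since after fixing the start vertex of the cycle we make $\ell-1$ steps and the last one is forced to close up) shows the number of (connecting walk, cycle) pairs is at most $2\Delta^{g}\cdot\Delta^{\ell-1}$. Being crude, every cycle in the $g$-neighborhood is covered this way, so the count is at most $2\Delta^{g+\ell-1}$.

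\medskip

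\textbf{Survival probability and summation.} Conditioned on $e\in G'$, the other edges are still included independently with probability $p=D/\Delta$. A configuration consisting of a connecting walk of length $j\le g$ and a cycle of length $\ell$ uses at least $\ell$ new edges (possibly fewer than $j+\ell$ if the walk backtracks or reuses edges, but certainly at least $\ell$ distinct new edges, and if the walk is edge-disjoint from the cycle it uses $j+\ell$); to have the whole thing appear in $G'$ we need all of them present, which has probability at most $p^{\ell}=(D/\Delta)^{\ell}$. (A slightly more careful accounting using $j$ edges on the connecting path only improves this; the crude bound $p^\ell$ suffices.) Hence
\begin{align*}
\mb{P}[\exists\text{ cycle in the }g\text{-neighborhood of }e \mid e\in G']
&\le \sum_{\ell=3}^{2g+1} 2\Delta^{g+\ell-1}\cdot\left(\frac{D}{\Delta}\right)^{\ell}
= \frac{2}{\Delta}\sum_{\ell=3}^{2g+1}\Delta^{g}D^{\ell}.
\end{align*}
Using $D\ge 2$ so that $\sum_{\ell\le 2g+1}D^{\ell}\le 2D^{2g+1}\le D^{2g+2}$, and $\Delta^{g}\le D^{?}$—wait, that is false in general; instead I would keep $\Delta^g$ as is and absorb it differently. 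Let me redo the bookkeeping: the point is that $\Delta^{g}$ appears because the connecting walk has up to $g$ steps, but such a walk of length $j$ costs $\Delta^{j}$ choices and survives with probability $p^{j}=(D/\Delta)^{j}$, so the $\Delta^{j}$ exactly cancels and each connecting walk contributes a factor of $D^{j}\le D^{g}$, not $\Delta^{g}$. So the correct bound is
\begin{align*}
\mb{P}[\cdots\mid e\in G']
&\le \sum_{j=0}^{g}\sum_{\ell=3}^{2g+1} 2\,D^{j}\cdot\Delta^{\ell-1}\left(\frac{D}{\Delta}\right)^{\ell}
= \frac{2}{\Delta}\sum_{j=0}^{g}\sum_{\ell=3}^{2g+1}D^{j}D^{\ell}
\le \frac{2}{\Delta}\cdot(g+1)D^{g}\cdot 2D^{2g+1}.
\end{align*}
Since $(g+1)\le D^{g}$ for $D\ge 2$, and $4\le D^{2}$, this is at most $D^{5g}/\Delta\cdot\Delta/\Delta\le 3D^{5g}/\Delta$ after collecting the exponents $g+g+2g+1+(\text{slack})\le 5g$; a clean way is to note $4(g+1)D^{3g+1}\le 3D^{5g}$ for all $g\ge 1$, $D\ge 2$, which finishes the bound.

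\medskip

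\textbf{Main obstacle.} The genuinely delicate point is the exponent bookkeeping: making sure the $\Delta^{(\cdot)}$ from the path- and cycle-counting is fully cancelled by the $p^{(\cdot)}=(D/\Delta)^{(\cdot)}$ from the independent sampling, so that only powers of $D$ (and the single $1/\Delta$) remain, and then verifying that the leftover polynomial-in-$g$ factors and constants fit under $3D^{5g}$ using $D\ge 2$. The $5g$ is not tight—a more careful argument (noting a cycle in the $g$-neighborhood of an \emph{edge} has length at most $2g+1$ and the connecting path at most $g-1$, and that path-plus-cycle share a vertex) gives roughly $D^{3g}$—but the stated $3D^{5g}/\Delta$ is all that's needed downstream, so I would not optimize further. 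One should also handle the trivial edge case where the $g$-neighborhood is too small to contain any cycle (e.g. $g\le 1$), where the bound holds vacuously.
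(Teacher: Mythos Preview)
Your approach---union-bound over (connecting path, cycle) configurations and cancel the $\Delta^{(\cdot)}$ count against the $p^{(\cdot)}$ survival probability---is exactly what the paper does; the paper just splits off the case ``cycle through $e$'' separately before handling the (path, cycle) case. Two steps in your write-up do not go through as stated, however.

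First, the claim ``a walk of length $j$ \ldots survives with probability $p^j$'' is not a valid upper bound: a walk may repeat edges, in which case it uses fewer than $j$ distinct edges and its survival probability exceeds $p^j$, so the sum $\sum_{j,\ell} 2D^j\cdot D^\ell/\Delta$ is not justified from walks. The fix (which the paper makes explicit) is to restrict to \emph{simple} paths that are edge-disjoint from the cycle: if some cycle lies in the $g$-neighborhood of $e$ in $G'$, a shortest $G'$-path from $\{u,v\}$ to that cycle is simple and shares no edge with it, yielding exactly $\ell_P+\ell_C$ distinct new edges and making the bound $p^{\ell_P+\ell_C}$ legitimate. Your final double sum is then correct.

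Second, the closing inequality $4(g{+}1)D^{3g+1}\le 3D^{5g}$ is false at $g=1$, $D=2$ (it reads $128\le 96$), and $g=1$ is not vacuous: a triangle through $u$ or $v$ already sits in the $1$-neighborhood. The paper sidesteps this by bounding both geometric sums sharply, $\big(\sum_{j=0}^g D^j\big)\big(\sum_{\ell=3}^{2g} D^\ell\big)\le D^{g+1}\cdot D^{2g+1}=D^{3g+2}\le D^{5g}$ for all $g\ge 1$, and then adding the separate ``cycle through $e$'' contribution of at most $D^{2g}/\Delta$; replacing your $(g{+}1)D^g$ by the geometric bound $2D^g$ and your cycle-length cap $2g{+}1$ by $2g$ fixes this.
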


\begin{proof}
We first consider the case where the $g$-neighborhood of $e$ has a cycle containing $e$. In this case, there exists a cycle in the $g$-neighborhood with length at most $2g$.

Note that the maximum degree of $G$ is $\Delta$ and hence edge $e \in G$ is in at most $\Delta^{\ell-2}$ cycles of length $\ell.$ Thus the probability that every edge (excluding $e$) in such a cycle is in $G'$ is $(D/\Delta)^{\ell-1}.$ Thus the probability that $e$ is in such a cycle is at most
\[ \sum_{\ell=3}^{2g} \Delta^{\ell-2} \cdot (D/\Delta)^{\ell-1} \le \sum_{\ell=3}^{2g} D^{\ell-1}/\Delta \le D^{2g}/\Delta.\]
We now turn to the case where the cycle $C$ does not contain $e = (u, v)$. Then there must be a path $P$ of length $\ell_P \le g$ from either $u$ or $v$ to some vertex $w$ on $C$, and such that $P$ and $C$ have disjoint edges. Also, $C$ has some length $\ell_C \le 2g.$ Because the maximum degree of $G$ is $\Delta$, the number of pairs of paths and cycles $(P, C)$ satisfying these conditions in $G$ is bounded by $2\Delta^{\ell_P + \ell_C - 1}.$ Hence the probability that some pair $(P, C)$ has all edges in $G'$ is at most
\begin{align*}
~&\sum_{\ell_P=0}^g \sum_{\ell_C=3}^{2g} 2\Delta^{\ell_P + \ell_C - 1} \cdot (D/\Delta)^{\ell_P+\ell_C} \\
=~&2\Delta^{-1} \sum_{\ell_P=0}^g \sum_{\ell_C=3}^{2g} D^{\ell_P + \ell_C} = 2\Delta^{-1}\left(\sum_{\ell_P=0}^g D^{\ell_P} \right)\left(\sum_{\ell_C=3}^{2g} D^{\ell_C} \right) \le 2D^{3g+2}/\Delta \le 2D^{5g}/\Delta.
\end{align*}
The claim follows by combining this with previous case where $e$ is in the cycle.
\end{proof}

\begin{lemma}\label{lemma:degree_reduction}
Given a maximum degree $\Delta$ graph $G$
\cref{algo:subsample} returns a subgraph $G'$ satisfying:
\begin{itemize}
    \item The maximum degree of $G'$ satisfies $\Delta(G') \le \Delta'.$
    \item Any edge $e \in G$ is in $G'$ with probability between \[ \left(1 - 5\sqrt{\frac{\log \Delta'}{\Delta'}}\right)\frac{\Delta'}{\Delta} \quad \qquad \text{and} \quad \qquad \frac{\Delta'}{\Delta} . \]
    \item An edge $e\in G$ is in $G'$ and its $g$-neighborhood contains a cycle with probability at most
    \[\left(3\frac{(\Delta')^{5g}}{\Delta}\right)\frac{\Delta'}{\Delta}. \]
\end{itemize}
\end{lemma}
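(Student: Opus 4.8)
The plan is to establish the three bullets in turn, with the last two resting on one observation: the edges of $G$ that pass the uniform coin test in \cref{algo:subsample} (those $e_i$ with $R \le (1-\eta)\Delta'/\Delta$, \emph{ignoring} the degree check) form a subgraph $\widetilde{G}$ in which each edge of $G$ is present independently with probability $p := (1-\eta)\Delta'/\Delta$, and $G' \subseteq \widetilde{G}$ with probability $1$. In particular $\widetilde{G}$ is exactly the random subgraph of \cref{lemma:highgirth} with $D = (1-\eta)\Delta' \le \Delta'$ (and $D \ge 2$ once $\Delta'$ exceeds an absolute constant, which we assume throughout since $\Delta' = \omega(1)$). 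For the maximum degree, which holds deterministically, fix a vertex $v$ and track its counter $d_v$: whenever an incident edge is added to $G'$ we have $d_v < \Delta'$, and $d_v$ is incremented right after; since $d_v$ is nondecreasing, the $j$-th incident edge added to $G'$ sees $d_v \ge j-1$, so $j - 1 < \Delta'$, giving $\Delta(G') \le \Delta'$.

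For the marginal probability, fix $e = (u,v)$ and let $d_u, d_v$ be the counter values when $e$ arrives. The event that $e$ passes its coin test is determined by $e$'s own uniform variable alone, hence independent of $d_u, d_v$ (functions of earlier coins only), so
\[
\mathbb{P}[e \in G'] \;=\; (1-\eta)\tfrac{\Delta'}{\Delta}\cdot\mathbb{P}\!\left[d_u < \Delta'\ \text{and}\ d_v < \Delta'\right],
\]
which already yields the upper bound $\Delta'/\Delta$. For the lower bound, $d_u$ is a sum of at most $\Delta-1$ independent $\mathrm{Bernoulli}(p)$ variables (one per edge incident to $u$ arriving before $e$), so $\mathbb{E}[d_u] \le (1-\eta)\Delta' < \Delta'$, and a Chernoff bound gives $\mathbb{P}[d_u \ge \Delta'] \le \exp(-(\eta\Delta')^2/(2\Delta')) = (\Delta')^{-9/2}$, using precisely $\eta\Delta' = 3\sqrt{\Delta'\log\Delta'}$. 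A union bound over $u, v$ together with $(1-\eta)(1 - 2(\Delta')^{-9/2}) \ge 1 - 5\sqrt{(\log\Delta')/\Delta'}$ (which holds for $\Delta'$ above an absolute constant, and is vacuous otherwise since the target lower bound is then nonpositive) finishes this bullet.

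For the cycle bound, $G' \subseteq \widetilde{G}$ gives two monotonicities: distances in $G'$ are at least those in $\widetilde{G}$, so the $g$-neighborhood of $e$ in $G'$ sits inside its $g$-neighborhood in $\widetilde{G}$; and a cycle in $G'$ is a cycle in $\widetilde{G}$. Hence the event ``$e \in G'$ and the $g$-neighborhood of $e$ in $G'$ contains a cycle'' is contained in ``$e \in \widetilde{G}$ and the $g$-neighborhood of $e$ in $\widetilde{G}$ contains a cycle,'' whose probability, by \cref{lemma:highgirth} applied to $\widetilde{G}$ conditionally on $e \in \widetilde{G}$, is at most $\mathbb{P}[e \in \widetilde{G}]\cdot 3D^{5g}/\Delta \le (\Delta'/\Delta)\cdot 3(\Delta')^{5g}/\Delta$, as required. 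The only place that needs real care is the Chernoff estimate: one must verify that the gap between $\mathbb{E}[d_u]$ and the threshold $\Delta'$ is at least $\eta\Delta'$, and that the exponent $\tfrac92\log\Delta'$ it produces (after the union bound) fits within the slack between $\eta = 3\sqrt{(\log\Delta')/\Delta'}$ and the target $5\sqrt{(\log\Delta')/\Delta'}$; once $\widetilde{G}$ is isolated, everything else --- including checking the hypothesis $D \ge 2$ of \cref{lemma:highgirth} --- is routine.
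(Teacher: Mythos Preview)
Your proof is correct and follows essentially the same approach as the paper: Chernoff for the degree counters to get the marginal lower bound, and \cref{lemma:highgirth} for the cycle bound. If anything, you are more careful than the paper on the third bullet --- the paper just says it ``follows immediately from \cref{lemma:highgirth},'' whereas you explicitly introduce the purely i.i.d.\ subgraph $\widetilde{G}\supseteq G'$ and argue via monotonicity that the relevant event in $G'$ is contained in the corresponding event in $\widetilde{G}$, which is the subgraph to which \cref{lemma:highgirth} actually applies. Your Chernoff exponent $\eta^2\Delta'/2$ is slightly sharper than the paper's $\eta^2\Delta'/3$, but both comfortably fit inside the slack between $3\sqrt{(\log\Delta')/\Delta'}$ and $5\sqrt{(\log\Delta')/\Delta'}$.
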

\begin{proof}
The maximum degree condition in the first bullet point follows immediately by line \ref{line:restrictdegree} of \cref{algo:subsample}. The third bullet point follows immediately from \cref{lemma:highgirth}.

It suffices to show the second bullet point. First note that the probability a given edge is included is always bounded by $(1-\eta)\Delta'/\Delta\le \Delta'/\Delta$ due to the initial sub-sampling. We next lower bound the probability that any vertex $v$ ever reaches the $\Delta'$ threshold. Let $d(v)$ be the degree of $v$ in $G$, and let $X_1, \dots, X_{d(v)}$ denote the events that the edges out of $v$ are initially subsampled in line \ref{line:subsample} of \cref{algo:subsample}. Note that $\mb{E}[X_i] = (1-\eta)\Delta'/\Delta.$ Hence by a Chernoff bound, we have that
\[ \mb{P}\left[\sum_{i=1}^{d(v)} X_i \ge \Delta' \right] \le \exp(-\eta^2\Delta'/3). \]
Thus the probability that an edge $e$ is subsampled by line \ref{line:subsample} of \cref{algo:subsample} and neither of its endpoints $u, v$ violates the degree bound of $\Delta'$ is at least \[ (1-\eta)\frac{\Delta'}{\Delta} \cdot (1-2\exp(-\eta^2\Delta'/3)) \ge (1-\eta-2\exp(-\eta^2\Delta'/3))\frac{\Delta'}{\Delta}. \]
Using that $\eta = 3\sqrt{(\log \Delta')/\Delta'}$ this completes the proof. 
\end{proof}

\section{Algorithmic Description and Tree Recurrences}
\label{sec:ouralgo}

\begin{algorithm}[!ht]
\caption{Computes a matching on a graph $G$ that arrives online with edges $e_1, e_2, \dots, e_m$, with sampling parameter $C > \Delta(G) + 2\sqrt{\Delta(G)}.$ \label{algo:matching}}
\SetKwProg{Proc}{procedure}{}{}
\Proc{$\textsc{Matching}(G, C)$}{
    $\mc{M}\assign \{\}$\Comment{Current set of matched edges} \\
    $d_v \assign 0$ for $v \in V(G).$ \Comment{Current degree of vertex $v$} \\
    $m_v \assign \textbf{False}$ for $v \in V(G).$ \Comment{Records whether vertex $v$ is currently matched} \\
    \For{$i = 1, \dots, m$}{
        \tcp{Edge $e_i = (u, v)$ arrives}
        \If{$m_u = $ \emph{\textbf{False}} \emph{and} $m_v = $ \emph{\textbf{False}}}{
            $R \assign \tsc{unif}([0,1])$ \\
            
            \If{$R\le C/((C-d_u)(C-d_v))$ \label{line:checkr}}{
                $\mc{M}\assign \mc{M} \cup e$ \\
                $m_u \assign \textbf{True}, m_v \assign \textbf{True}.$
            }
        }
        $d_u \assign d_u + 1, d_v \assign d_v + 1.$
    }
}
\end{algorithm}

We will apply \cref{algo:matching} on the subsampled graph.
We now study its effect on treelike graphs.

\begin{theorem}[Matchings on Locally treelike Graphs]
\label{thm:matchingtreelike}
Fix $\delta \in (0, 1/20)$ and assume $\Delta\ge 25$. Consider a graph $G$ with maximum degree $\Delta$ and an edge $e$ whose $g$-neighborhood contains no cycles in $G$. If $C > \left(\frac{e}{e-1} + \delta \right)\Delta$, then $\textsc{Matching}(G, C)$ (\cref{algo:matching}) includes edge $e$ in the final matching with probability at least $\frac{1}{C}\left(1 - (1-\delta/4)^{(g-1)/2} - \frac{10^4}{\delta C}\right)^2$.
\end{theorem}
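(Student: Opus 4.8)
The plan is to carry out the ``edge matching game'' strategy sketched in \cref{sec:techniques}. Let $T$ be the $g$-neighborhood of $e=(u,v)$, which by hypothesis is a tree; root it at $e$, so $T=T_u\cup\{e\}\cup T_v$ with $T_u,T_v$ the subtrees hanging off $u$ and $v$, let $n_w$ be the number of children of a vertex $w$, and call an edge a \emph{boundary edge} if one of its endpoints is at distance exactly $g$ from $e$. \textbf{First}, run \cref{algo:matching} but cede the boundary edges to an adaptive, possibly randomized adversary who, on arrival of such an edge, may declare it matched or unmatched (subject only to validity), with the aim of minimizing $\mb{P}[e\in\mc M]$. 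Since each such choice is an outcome \cref{algo:matching} would itself produce with some probability, passing to this game only decreases $\mb P[e\in\mc M]$, so it suffices to lower bound the latter. \textbf{Second}, reduce the arrival order: if a non-boundary edge $f$ of $T$ arrives after its parent edge (the one step closer to $e$, with $e$ regarded as parent of its two incident edges), delete $f$ together with its whole subtree. This changes nothing seen by \cref{algo:matching} before $f$'s arrival, and afterwards the deleted subtree can influence whether $e$ is matched only through a vertex already controlled by the adversary, so that influence can be folded into the adversary. Hence we may assume every non-boundary edge of $T$ arrives before its parent edge; in particular all edges at $u,v$ other than $e$ arrive before $e$.

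\textbf{Third}, I will prove a monotonicity statement in the spirit of Weitz \cite{Wei06} by an inductive coupling run up $T$: decreasing the probability with which the adversary matches a boundary edge, even under adaptive play, never increases $\mb P[e\in\mc M]$. This shows adaptivity and randomization are useless to the adversary, so his optimum is a deterministic strategy fixed in advance, and (by parity of the levels of $T$) it leaves every boundary edge unmatched. With the boundary frozen, the processes restricted to $T_u$ and to $T_v$ use disjoint edge sets and independent coins, so the events ``$u$ matched before $e$ arrives'' and ``$v$ matched before $e$ arrives'' are independent. Writing $q_w$ for the probability that $w$ is not matched from below and noting $d_u=n_u$, $d_v=n_v$ when $e$ arrives, the reasoning of \eqref{eq:matching_probability} gives
\[
\mb P[e\in\mc M]\;=\;q_u\,q_v\cdot\frac{C}{(C-n_u)(C-n_v)}.
\]

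\textbf{Fourth}, I set up and analyze the error recurrence. For a non-boundary internal $w$ with children $w_1,\dots,w_{n_w}$ in order of arrival, the bottom-up order forces $d_w=i-1$ when $(w,w_i)$ arrives and $d_{w_i}=n_{w_i}$, so $q_w=\prod_{i=1}^{n_w}\bigl(1-q_{w_i}\,\frac{C}{(C-i+1)(C-n_{w_i})}\bigr)$; defining $\eps_w$ by $q_w=\frac{C-n_w}{C}(1-\eps_w)$ and substituting the same identity for each $q_{w_i}$ makes the product telescope to exactly \eqref{eq:upwards}, $\eps_w=1-\prod_{i=1}^{n_w}\bigl(1+\frac{\eps_{w_i}}{C-i}\bigr)$. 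At a depth-$(g-1)$ vertex every child edge is a boundary edge, hence unmatched, so $q_w=1$ and $|\eps_w|=\frac{n_w}{C-n_w}<e-1$; and $q_w\in[0,1]$ forces $|\eps_w|\le\frac{\Delta}{C-\Delta}$ at every level. The heart of the proof is a quantitative contraction for \eqref{eq:upwards}: one step at least flips the sign of the error and scales its size by roughly $\mu:=\log\frac{C}{C-\Delta}$, so two steps contract by a factor $\le 1-\delta/4$, i.e.\ an estimate $\max_{\mathrm{depth}\,k}|\eps|\le(1-\delta/4)\max_{\mathrm{depth}\,k+2}|\eps|+O(1/C)$. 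The ingredients are (i) $\mu\le 1-\delta/2$ when $C>(\frac{e}{e-1}+\delta)\Delta$, a short calculus estimate; (ii) $\sum_{i=1}^{n_w}\frac{1}{C-i}\le\mu$ and $|\log(1+x)-x|\le x^2$ to pass between the product and $\exp(\mu\cdot(\mathrm{error}))$; and (iii) the a priori bound $|\eps_w|\le\frac{\Delta}{C-\Delta}$ to control the quadratic remainder and the negative branch --- precisely the range $C>\frac{e}{e-1}\Delta$ in which $0$ is the unique period-$\le 2$ point of $x\mapsto 1-e^{\mu x}$ on $[-\frac{\Delta}{C-\Delta},1]$, the ``uniqueness'' threshold. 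Iterating from depth $g-1$ up to depth $0$ and summing the geometric series of additive terms gives $|\eps_u|,|\eps_v|\le(1-\delta/4)^{(g-1)/2}+\frac{10^4}{\delta C}$; then $\mb P[e\in\mc M]=\frac1C(1-\eps_u)(1-\eps_v)\ge\frac1C\bigl(1-\max(|\eps_u|,|\eps_v|)\bigr)^2$ is the claim.

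The reductions and the derivation of \eqref{eq:upwards} are conceptual and follow the correlation-decay template; the only care there is the bookkeeping of which edges count toward $d_w$ (handled by the bottom-up reduction) and pushing the monotonicity coupling through against an \emph{adaptive} adversary. The genuine obstacle is the contraction estimate: the crude $1+x\le e^x$ bounds do not self-evidently contract on the whole admissible interval $[-\frac{\Delta}{C-\Delta},1]$, because the error at the bottom of $T$ is a constant of order $1$, not small; one must exploit the global stability of the fixed point $0$ in the uniqueness regime and turn it into an explicit two-step rate that degrades only like $\delta$ as $C\downarrow\frac{e}{e-1}\Delta$. I expect this statistical-physics-style computation to occupy the bulk of the argument.
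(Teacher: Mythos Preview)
Your plan is correct and matches the paper's proof step for step: game reduction (\cref{lemma:gamereduction}), witness-tree pruning to a bottom-up order (\cref{def:witnesstree}), monotonicity to freeze the boundary (\cref{lemma:obliviouschoice}), and a two-step contraction of the error recurrence \eqref{eq:epsrecurrence}. Two points on execution are worth flagging. First, the paper proves monotonicity not by coupling but by the cleaner observation that, once earlier boundary edges are fixed, the probability that $e$ is matched is \emph{linear} in the adversary's randomization parameter for the last remaining boundary edge, with sign alternating per level of the tree; induction then makes the adversary oblivious and deterministic. Second, your proposal to track $\max|\eps|$ per level has a small gap: without a separate sign-consistency argument, a single step can take $|\eps|\mapsto e^{\mu|\eps|}-1$, and the naive two-step bound blows up rather than contracts on the whole interval $[0,\Delta/(C-\Delta)]$. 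The paper sidesteps this by tracking $\eps^{\max}_\ell=\max_w\max(\eps_w,0)$ and $\eps^{\min}_\ell=\min_w\min(\eps_w,0)$ separately, proving cross one-step bounds (\cref{lemma:onelevel}), and then reducing the two-step contraction on $\eps^{\max}$ to the exact inequality $f_\delta(f_\delta(\eps))\le(1-\delta)\eps$ for all $\eps\ge 0$ (\cref{lemma:numerical}), which follows from a one-line termwise Taylor comparison---so no large-$\eps$ case analysis is needed at all.
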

We will apply \Cref{thm:matchingtreelike} for the graph $G'$ constructed by \cref{lemma:degree_reduction} (whose maximum degree is denoted there by $\Delta'$). To interpret \cref{thm:matchingtreelike}, note that the $g$-neighborhood of $e$ is a tree. Intuitively, our result says that if the number of colors $C$ exceeds a ``critical threshold'' $\frac{e}{e-1}\Delta$ then the correlations between colors on the boundary of the $g$-neighborhood tree of $e$ decay towards the top, and hence $e$ is almost uniform.

To formalize this intuition, we now reduce the study of locally treelike graphs to trees with a given fixed boundary. The analysis is modeled after that developed by Weitz \cite{Wei06} for proving correlation decay in the hardcore model. In particular, we show a key monotonicity claim on subtrees which reduces the analysis of our algorithm to a tree recurrences computation. This observation is closely related to that made by Weitz \cite{Wei06} in studying the hardcore model.

In this case, for an edge $e$ we consider the length $g$ neighborhood of $e$, and denote the graph as $T$ as it is a tree. Define $\partial T$ to be the \emph{boundary edges} of this set, so $\partial T := E(V(T), V(G) \backslash V(T))$. We will imagine that the endpoint of each boundary edge that lies outside of $V(T)$ are distinct -- we will not need to consider collisions between these endpoints. This way, we imagine that $T \cup \partial T$ is a tree. See \cref{fig:tree} for an example.

At a high level we will argue that if the edges in $\partial T$ are chosen to be matched or unmatched adversarially (even adaptively), in the worst case the probability that edge $e$ is matched is still $(1-o(1))/\Delta'$.

Our starting direction is to reduce the potentially complicated arrival ordering of edges in the neighborhood of an edge $e$ to the more natural ordering from bottom-to-top in the tree. To see this we first define the \emph{witness tree} of an edge $e$, which captures all edges in the neighborhood which can possibly influence the probability that $e$ is matched.
\begin{definition}[Witness tree]
\label{def:witnesstree}
We say that an edge $f \in T\cup\partial T$ is \emph{alive} if it is processed before any edge $f'$ that lies strictly above $f$. We let the \emph{witness tree} $W(T)$ be the set of alive edges connected to $e$ in $T\cup\partial T$.
\end{definition}
Note that by definition the edges in the witness tree are processed from leaves (or boundary edges) upwards. Note that there may be edges in the witness tree that are connected to $e$ but not downwards to the boundary.
See \cref{fig:tree} for an example.

\begin{definition}[Matched edge or vertex]
\label{def:matched}
We say that an edge $f$ is matched (at some stage) if it is part of the matching $\mathcal{M}$ at that stage during an invocation of \textsc{Matching}~(\cref{algo:matching}). We say that a vertex $v$ is matched if it has an adjacent matched edge.
\end{definition}

Next we define a game on the witness tree. The goal of the game will be to minimize 
the probability that the top edge $e$ is matched.
\begin{definition}[Edge Matching Game]
\label{def:game}
For an edge $e$ and an ordering of the edges $f_1, \dots, f_{|E(W(T))|}$ of the witness tree $W(T)$ (\cref{def:witnesstree}),
consider the following game. At stage $i$, edge $f_i$ is revealed. If $f_i$ is a boundary edge, i.e. $f_i \in \partial T,$ then the player may either choose to match $f_i$ or not match $f_i$ arbitrarily (possibly in a randomized manner).\footnote{We assume that the player can match a boundary edge $f_i \in \partial T$ even if an adjacent vertex is matched -- this does not affect the proofs later.}
Otherwise, $f_i = (u, v)$ is added to the matching with probability $\frac{C}{(C-d_u)(C-d_v)}$ (as given in \cref{algo:matching}) if neither $u$ or $v$ is matched.
\end{definition}
We can assume that in the edge matching game the edge $e$ is the one processed last. We argue that the probability that an edge $e$ is matched during an invocation of \cref{algo:matching} is lower bounded by the minimum probability that edge $e$ is matched during the edge matching game (\cref{def:game}).

\begin{lemma}[Reduction to Matching Game]
\label{lemma:gamereduction}
Let $m_e$ denote the probability that an edge $e$ is matched during \cref{algo:matching}. Let $m^{\min}_e$ denote the minimum probability that edge $e$ is matched under optimal play in the edge matching game (\cref{def:game}). Then $m^{\min}_e \le m_e$.
\end{lemma}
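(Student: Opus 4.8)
The plan is to show that any run of \cref{algo:matching} on $G$ can be coupled with a particular strategy in the edge matching game on $W(T)$, in such a way that the event ``$e$ is matched'' occurs in the coupling for the game whenever it occurs in the algorithm. Since $m^{\min}_e$ is the minimum over \emph{all} strategies, exhibiting one strategy that matches $e$ with probability exactly $m_e$ already gives $m^{\min}_e \le m_e$.

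First I would set up the coupling. Run \cref{algo:matching} on the whole graph $G$, using the same uniform random variables $R$ for the edges. The key observation, which I would isolate as a small claim, is that whether $e \in T$ gets matched depends only on the states of the edges in the witness tree $W(T)$: indeed an edge $f \in T \cup \partial T$ that is \emph{not} alive (i.e.\ is processed after some edge strictly above it in the tree) cannot influence $e$ except through its parent-edge, and more generally, since $T \cup \partial T$ is a tree, the matched/unmatched status of an interior edge $f=(u,v)$ of $W(T)$ at the time of its arrival is determined by the statuses of the edges below it in $W(T)$ together with any boundary edges incident to $u,v$. This is exactly the ``bottom-to-top'' arrival structure the text emphasizes: process the edges of $W(T)$ in the order of their true arrival times (restricted to $W(T)$), which by \cref{def:witnesstree} respects the tree's partial order.

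Next I would define the game strategy: when the player is handed a boundary edge $f_i \in \partial T$, she matches it if and only if, in the coupled run of the real algorithm, the corresponding edge of $G$ was matched at the analogous moment; equivalently, she simulates the real algorithm on the rest of $G$ (all of which lies on the far side of the boundary) using fresh randomness, and copies its decision. This is legitimate because \cref{def:game} explicitly allows randomized, adaptive play on boundary edges, and because the part of $G$ outside $T$ that governs a boundary edge's fate is disjoint (across the tree) from everything inside $W(T)$, so the simulation is consistent. For interior edges of $W(T)$ the game rule is \emph{identical} to the algorithm's rule (sample with probability $C/((C-d_u)(C-d_v))$ if both endpoints unmatched), so we can feed it the same $R$'s. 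An induction up the witness tree — base case: leaf/boundary edges; inductive step: an interior edge sees the same neighbor-statuses in the game as in the algorithm — then shows that at every edge of $W(T)$ the game and the algorithm agree on ``matched or not.'' In particular they agree on $e$, giving $\mathbb{P}[e \text{ matched in this strategy}] = m_e$, hence $m^{\min}_e \le m_e$.

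The main obstacle is making precise the claim that in the real algorithm the fate of an interior edge $f=(u,v)$ of $W(T)$ is a function only of the statuses of edges in $W(T)$ (and not of other edges of $G$ incident to $u$ or $v$). One must check two things: (i) at the moment $f$ is processed, whether $u$ is matched is determined by whether some already-processed edge at $u$ is matched, and since $f$ is alive, all edges at $u$ processed before $f$ that could be matched and lie in $T\cup\partial T$ are themselves alive, hence in $W(T)$ — edges at $u$ outside $T\cup\partial T$ do not exist because $u$ is an interior vertex of $T$ at distance $<g$ from an endpoint of $e$; and (ii) the sampling probability uses $d_u$, the current degree, which counts \emph{all} arrived edges at $u$, including ones outside $W(T)$ — here one notes that the game in \cref{def:game} likewise uses the $d_u$ from \cref{algo:matching}, so we simply let the game inherit these counters from the coupled run, and since increasing $d_u$ only \emph{decreases} the matching probability, even the crude bound suffices (though in the coupling we keep them equal on the nose). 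Handling the bookkeeping of which edges are ``alive'' versus merely ``present,'' and confirming the tree structure rules out the problematic cross-dependencies, is the part that needs care; everything else is a routine induction.
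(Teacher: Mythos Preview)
Your approach is exactly the paper's: exhibit one particular (randomized, adaptive) strategy for the player in the edge matching game that simulates \cref{algo:matching} on the boundary, so that $e$ is matched in the game with probability $m_e$; the inequality $m_e^{\min}\le m_e$ then follows by definition of the minimum.

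One clarification on your point (ii): the worry there is a non-issue, and your side remarks are slightly off. For a non-boundary edge $f=(u,v)\in W(T)$, every edge of $G$ incident to $u$ or $v$ that has arrived before $f$ is automatically in $W(T)$: such an edge is a child of $u$ or $v$ in the tree $T\cup\partial T$ (the unique parent edge cannot have arrived yet, else $f$ would not be alive), and a child arriving before $f$ arrives before all of $f$'s ancestors (since $f$ is alive), hence is itself alive and connected to $e$ through $f$. So the degree counters $d_u,d_v$ in the game and in \cref{algo:matching} agree exactly, and you do not need any monotonicity argument here---which is fortunate, because increasing $d_u$ \emph{increases} the sampling probability $C/((C-d_u)(C-d_v))$, not decreases it. Your fallback ``equal on the nose'' is the correct (and only needed) statement.
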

\begin{proof}
It suffices to show that \cref{algo:matching} can be simulated by the edge matching game for an edge $e$. This is obvious by taking the probability of matching the boundary edge to be the probability that it is matched conditional on all decisions made up to that point.
We also note that edges that do not belong to the witness tree may only influence the probability of $e$ being matched by \cref{algo:matching} indirectly through the boundary edges, but in the edge matching game the adversary has full control over the boundary edges anyway. 
\end{proof}

Next we argue that (somewhat surprisingly) to minimize the probability that edge $e$ is matched in the edge matching game, the player can make choices for boundary edges that are oblivious to any previous choices, i.e. they can all be decided before the start of the game.
To do this it is useful to write out the crucial tree recurrences for calculating the probabilities that a vertex $v$ is matched to a vertex (or unmatched) below it.
\begin{definition}[Tree Recurrence Probabilities]
\label{def:recurrenceformula}
Consider a subgame of the edge matching game where all boundary edges are decided as to whether they are matched and the remaining edges to process form a tree $T'$ that is processed from the leaves upwards. For a vertex $v \in T$ define $q_v$ as the probability that vertex $v$ is \emph{not} matched via some edge below it in the tree on the edge matching game restricted to $T'$.
\end{definition}
Let the current number of children of a vertex $v$ (in $T \backslash T'$) be $c_v'$ and the total number of children by $c_v$. For $i = c'_v+1, \dots, c_v$ let $v_i$ denote the $i$-th vertex under $v$ to be processed during the edge matching game on $T'$.
Clearly, if $v$ is a boundary vertex/edge which is matched by the adversary, then $q_v = 0$. Otherwise $v$ is unmatched if and only if all edges under it are unmatched. Because the events that these edges $(v, v_i)$ want to be matched are independent, i.e. $v_i$ is unmatched and $R \le \frac{C}{(C-d_v)(C-d_{v_i})}$ in line \ref{line:checkr} of \cref{algo:matching}, we get
\begin{align}
    q_v = \prod_{i = c'_v + 1}^{c_v} \left(1 - \frac{C}{(C-i+1)(C-c_{v_i})}q_{v_i} \right).
    \label{eq:recurrence2}
\end{align}
We are now in position to prove the main monotonicity claim.
\begin{lemma}[Oblivious Choices for Boundary Edges]
\label{lemma:obliviouschoice}
If $g$ is odd, then the minimum probability that edge $e$ is matched in the edge matching game is given by the strategy where all boundary edges are always unmatched.
\end{lemma}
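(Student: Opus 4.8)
\textbf{Proof proposal for \cref{lemma:obliviouschoice}.}

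The plan is to set up an induction up the tree $T\cup\partial T$ that tracks a single scalar per vertex: the probability $q_v$ that $v$ is unmatched from below, and to show a \emph{monotonicity} statement strong enough to fix all boundary decisions upfront. Concretely, I would prove the following claim by induction on the height of a subtree $T_v$ rooted at a vertex $v$ (equivalently at the edge above $v$): among all adversary strategies restricted to $T_v$ — possibly adaptive, possibly randomized — the optimal one is oblivious, and moreover the resulting $q_v$ is monotone in the ``target'' of $e$ being matched in the following sense. Since $e$ is processed last, the event ``$e$ matched'' has probability $\left(1-\frac{d_u}{C}\right)$-type factors only if $q_u$ and $q_v$ were \emph{independent}; so the first structural observation is that once all boundary edges are decided obliviously, the two principal subtrees hanging off $u$ and $v$ are genuinely independent (they are in different components of $W(T)$ before $e$ arrives, by \cref{def:witnesstree} and the bottom-to-top processing). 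Then $\mb{P}[e\text{ matched}] = q_u q_v\cdot \frac{C}{(C-d_u)(C-d_v)}$, which is \emph{increasing} in both $q_u$ and $q_v$. So it suffices to show that the adversary minimizing $\mb{P}[e\text{ matched}]$ is, at every boundary edge $f=(w,w')$ with $w\in V(T)$, \emph{minimizing} $q_w$ when the parity of the level of $w$ is right, and this is achieved obliviously.

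The key step is the sign-alternation / monotonicity lemma driven by the recurrence \eqref{eq:recurrence2}. Viewing $q_v$ as a function of the vector $(q_{v_i})_i$ of children values, \eqref{eq:recurrence2} is \emph{coordinatewise decreasing} in each $q_{v_i}$ (each factor $1 - \frac{C}{(C-i+1)(C-c_{v_i})}q_{v_i}$ is decreasing in $q_{v_i}$, and it is positive since $C>\Delta+2\sqrt\Delta$, so the product stays positive and the map is well-defined and monotone). Hence: decreasing a child's $q$ increases the parent's $q$; increasing a grandchild's $q$ increases a child's $q$ hence decreases the parent's $q$; and so on. Since $g$ is odd, the boundary edges sit at an even ``distance parity'' from $e$ relative to the leaves $u,v$ of interest — more precisely, $u$ and $v$ are at the same level, their subtrees have height $g$, and after an even number of applications of the decreasing map the effect of a boundary vertex on $q_u$ (resp. $q_v$) is monotone \emph{decreasing} in the boundary $q$-value. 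A boundary edge contributes $q_w=0$ if matched by the adversary and $q_w=1$ if left unmatched (using the footnote convention that the adversary may match a boundary edge regardless of its endpoint's status, and that such an edge has no children of its own, so $q_w\in\{0,1\}$ with $q_w=1$ under ``unmatched''). Therefore, to \emph{maximize} $q_u$ and $q_v$ — which is what minimizes $\mb{P}[e\text{ matched}]$ — the adversary wants every boundary $q_w$ as large as possible, i.e. $q_w=1$, i.e. \emph{always unmatched}; and since this choice is forced independent of history, obliviousness follows for free. Randomized adversary decisions can only yield a convex combination of the two extreme values $q_w\in\{0,1\}$, which by the monotonicity is weakly dominated by the deterministic ``unmatched'' choice.

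I would organize the write-up as: (i) a short lemma that \eqref{eq:recurrence2} defines a coordinatewise-monotone-decreasing map with values in $(0,1)$ (using $C>\Delta+2\sqrt\Delta$ to keep every factor in $(0,1)$); (ii) an induction up $W(T)$ showing that, holding all \emph{other} boundary decisions fixed, the value $q_u$ (resp. $q_v$) is a monotone function — decreasing after an even number of levels, increasing after an odd number — of any single boundary $q$-value, so that the optimal setting of that boundary edge is the extreme one dictated by parity and is independent of the realized history; since $g$ is odd this extreme is ``unmatched''; (iii) conclude that the joint optimum fixes \emph{all} boundary edges to ``unmatched'' simultaneously (the objective decomposes through $q_u$ and $q_v$, and within each subtree the boundary contributions interleave through the product \eqref{eq:recurrence2} without sign conflict because they all sit at leaves), and that at that point $q_u$ and $q_v$ are independent so $\mb{P}[e\text{ matched}]=q_u q_v\frac{C}{(C-d_u)(C-d_v)}$ as claimed.

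The main obstacle I anticipate is step (ii): carefully handling an \emph{adaptive} adversary and the fact that boundary edges may arrive \emph{interleaved} with non-boundary edges that are not their ancestors (as \cref{fig:tree} illustrates). The clean way around this is to argue one boundary edge at a time in reverse processing order: condition on everything up to the moment $f$ is revealed, observe that the future evolution of $q$-values along the unique path from $f$ up to $u$ (or $v$) is a fixed composition of the monotone maps \eqref{eq:recurrence2} applied with whatever child-counts and orderings actually occur, and that this composition's monotonicity type depends only on the number of edges on that path — which has fixed parity because $T\cup\partial T$ is a tree and $g$ is odd. Since the \emph{sign} of the dependence is history-independent, the optimal decision at $f$ is history-independent; peeling boundary edges off one by one then yields a fully oblivious optimal strategy, and feeding the forced values back into the recurrence shows it is the ``all unmatched'' strategy. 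A secondary subtlety is that some edges on the path from a boundary edge to $u$ may themselves still be un-arrived when $f$ is revealed; but because \eqref{eq:recurrence2} only ever composes these maps and never inverts them, the monotonicity argument is unaffected by the order in which the path edges are eventually processed.
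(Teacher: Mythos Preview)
Your plan is essentially the paper's proof: peel off boundary edges one at a time (last-processed first), use the recurrence \eqref{eq:recurrence2} to see that the conditional probability of matching $e$ depends on the current boundary decision with a sign determined solely by the distance to $e$, conclude that the optimal decision is history-independent, and induct. The paper phrases this as \emph{linearity} of $\mb{P}[e\text{ matched}\mid\text{history}]$ in the adversary's probability $r$ of matching the current boundary edge, with a sign that flips at each level; your monotonicity formulation is equivalent for the purpose of forcing $r\in\{0,1\}$.

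Your parity/sign bookkeeping, however, is wrong as written and only lands on the right conclusion via two cancelling errors. First, you write ``to maximize $q_u$ and $q_v$ --- which is what minimizes $\mb{P}[e\text{ matched}]$'', but $\mb{P}[e\text{ matched}] = q_u q_v \cdot C/((C-d_u)(C-d_v))$ is \emph{increasing} in $q_u,q_v$, so the adversary wants to \emph{minimize} them. Second, an even composition of decreasing maps is increasing, not decreasing. The correct count: a boundary edge has its inside endpoint $w$ at distance exactly $g$ from $u$ (respectively $v$), and the boundary decision feeds directly into $q_w$ (unmatched $\Rightarrow q_w$ larger); from $q_w$ up to $q_u$ there are $g$ applications of the coordinatewise-decreasing map \eqref{eq:recurrence2}, so with $g$ odd, $q_u$ is \emph{decreasing} in $q_w$. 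To minimize $q_u$ the adversary therefore wants $q_w$ as large as possible, i.e.\ leave the boundary edge unmatched. Once you straighten this out, your outline (i)--(iii) goes through; in step (ii) you should also say ``holding all \emph{later} boundary decisions fixed (by induction) and conditioning on the realized history, which already contains the earlier adaptive boundary decisions'' rather than ``holding all other boundary decisions fixed''.
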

\begin{proof}
Consider the edge matching game and let $f$ denote the last boundary edge to be processed. Say that given the game up to this point the player decides to match $f$ with probability $r$. We will use \eqref{eq:recurrence2} to argue that the probability that $e$ is matched is a linear function of $r$ and that the sign depends only on the distance of the boundary edge to $e$. Therefore optimally the player must choose $r = 0$ or $1$ and crucially the choice is independent of all randomness in the game up to this point. Hence the game is equivalent to the situation where edge $f$ is revealed deterministically at the start of time. By inducting on the remaining subgame we get that all edges are revealed deterministically at the start as desired.

It suffices now to verify using \eqref{eq:recurrence2} that the probability that $e$ is matched is a linear function of $r$ and that the sign depends only on the distance of the boundary edge to $e$. To see this note that the sign of $q_{v_i}$ (for fixed variables $q_{v_j} \in [0, 1]$ for $j \neq i$) in \eqref{eq:recurrence2} is $\frac{C}{(C-i+1)(C-c_{v_i})} \in [0, 1].$ Hence the sign of $r$ in the formula for $q_v$ for a vertex $v$ flips every level up the tree. This implies the claim.
\end{proof}

\subsection{Computing the Monotone Tree Matching Game} \label{sec:computing}
By \cref{lemma:obliviouschoice} we know that the boundary edges can be decided deterministically at the start. In this way we can simplify the tree recurrence formula (\cref{def:recurrenceformula}) in \eqref{eq:recurrence2} to the full witness tree $W(T)$
instead of a subtree $T'$. Here recall that $q_v$ is the probability that vertex $v$ is not matched to a vertex below it, and $c_v$ denotes the number of children of $v$ in the tree.
Also note that for any vertices $u, v$ such that neither is a descendent in the tree of another that the events of whether $u$ is matched and $v$ is matched to something below are independent.
\begin{align}
    q_v = \prod_{i=1}^{c_v} \left(1 - \frac{C}{(C-i+1)(C-c_{v_i})}q_{v_i}\right). \label{eq:recurrence}
\end{align}
Let $\Delta := \Delta(G)$. Our goal is to show that for $C > \left(\frac{e}{e-1} + o(1)\right)\Delta$ that the probabilities $q_v$ as we go up the tree contract towards $1 - \frac{c_v}{C} = \frac{C-c_v}{C}$. Thus it is natural to define the errors $\eps_v$ as
$\eps_v = 1 - q_v \cdot \frac{C}{C-c_v}$.
Plugging this into \eqref{eq:recurrence} gives that
\begin{align*}
    q_v &= \prod_{i=1}^{c_v} \left(1 - \frac{C}{(C-i+1)(C-c_{v_i})}q_{v_i} \right) \\
    &= \prod_{i=1}^{c_v} \left(1 - \frac{C}{(C-i+1)(C-c_{v_i})} \cdot \frac{C-c_{v_i}}{C}(1-\eps_{v_i})\right) \\
    &= \prod_{i=1}^{c_v} \left(\frac{C-i}{C-i+1} + \frac{1}{C-i+1}\eps_{v_i} \right) \\
    &= \frac{C-c_v}{C}\prod_{i=1}^{c_v} \left(1 + \frac{1}{C-i}\eps_{v_i}\right).
\end{align*}
Rearranging this equation along with
$q_v \cdot \frac{C}{C-c_v} = 1 - \eps_v$
gives us the recurrence on the error terms
\begin{align}
    \eps_v = 1 - \prod_{i=1}^{c_v}\left(1 + \frac{1}{C-i}\eps_{v_i}\right) \label{eq:epsrecurrence}.
\end{align}

We now start showing that the $\eps_v$ contract as we move up the tree. To this end we define $\eps^{\min}_\ell$ and $\eps^{\max}_\ell$ to the minimum/maximum values of $\eps_v$ for vertices $v$ that lie distance $g-\ell$ from edge $e$.
\begin{definition}
\label{def:minmaxeps}
For an edge $e$ with witness tree $T$ define $\eps^{\min}_\ell$ to the minimum values of $\min(\eps_v, 0)$ for vertices $v$ that lie distance $\ell$ from the boundary of $T$. Define $\eps^{\max}_\ell$ to the maximum values of $\max(\eps_v, 0)$ for vertices $v$ that lie distance $\ell$ from the boundary of $T$.
\end{definition}
Note that for $C \ge \frac{e}{e-1}\Delta$ we have $0 \ge \eps^{\min}_\ell \ge \frac{-c_v}{C-c_v} \ge -2$ and $0 \le \eps^{\max}_\ell \le 1.$

Recall from the proof of \cref{lemma:obliviouschoice} that the signs flip per level up the tree. This allows us to bound $\eps^{\min}_{\ell+1}$ in terms of $\eps^{\max}_\ell$ and similarly $\eps^{\max}_{\ell+1}$ in terms of $\eps^{\min}_\ell$.
\begin{lemma}[Single level error bounds]
\label{lemma:onelevel}
Let $C \ge \frac{e}{e-1}\Delta$ and $\Delta\ge 25$. For any level $\ell \in [0, g)$ we have that
\[ \eps^{\min}_{\ell+1} \ge 1 - \left(1 + \frac{10^2}{C}\right)\exp\left(\log\left(\frac{C}{C-\Delta}\right)\eps^{\max}_\ell \right)\] and
\[
\eps^{\max}_{\ell+1} \le 1 - \left(1 - \frac{10^2}{C}\right)\exp\left(\log\left(\frac{C}{C-\Delta}\right)\eps^{\min}_\ell \right).
\]
\end{lemma}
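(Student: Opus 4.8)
\emph{Approach.} This lemma is a one-step analysis of the recurrence \eqref{eq:epsrecurrence}, so the plan is: (i) use monotonicity of its right-hand side to reduce to the worst case where every child of a vertex has error exactly $\eps^{\max}_\ell$ (resp.\ $\eps^{\min}_\ell$) and there are exactly $\Delta$ children; (ii) replace the resulting degree-$\Delta$ product by an exponential; (iii) two-sidedly compare the partial harmonic sum $S:=\sum_{i=1}^{\Delta}\tfrac1{C-i}$ with $\log\tfrac{C}{C-\Delta}$; and (iv) absorb the leftover $O(1/\Delta)$ slack into the $\tfrac{10^2}{C}$ factors. Throughout, $C\ge\tfrac{e}{e-1}\Delta$ and $\Delta\ge25$ give $C-\Delta\ge\tfrac{\Delta}{e-1}\ge14$, which keeps every factor positive and all the Taylor estimates in range.

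\emph{Step (i): extremal children.} Fix a vertex $v$ one level up from the level-$\ell$ vertices, so its children $v_1,\dots,v_{c_v}$ are level-$\ell$ vertices with $\eps_{v_i}\in[\eps^{\min}_\ell,\eps^{\max}_\ell]$ and $c_v\le\Delta$. In \eqref{eq:epsrecurrence} all coefficients $\tfrac1{C-i}$ are positive and, since $\eps_{v_i}\ge-2$ and $C-i\ge14$, all factors $1+\tfrac1{C-i}\eps_{v_i}$ are positive; hence the product is coordinatewise nondecreasing in the $\eps_{v_i}$, and enlarging $c_v$ multiplies in further factors that are $\ge1$ when all errors are $\ge0$ and $\le1$ when all are $\le0$. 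Therefore, for \emph{every} such $v$,
\[
\eps_v\ \ge\ 1-\prod_{i=1}^{\Delta}\Bigl(1+\tfrac{\eps^{\max}_\ell}{C-i}\Bigr) \qquad\text{and}\qquad \eps_v\ \le\ 1-\prod_{i=1}^{\Delta}\Bigl(1+\tfrac{\eps^{\min}_\ell}{C-i}\Bigr).
\]
Because $\log\tfrac{C}{C-\Delta}>0$, the two target expressions $1-(1+\tfrac{10^2}{C})\exp(\log\tfrac{C}{C-\Delta}\,\eps^{\max}_\ell)$ and $1-(1-\tfrac{10^2}{C})\exp(\log\tfrac{C}{C-\Delta}\,\eps^{\min}_\ell)$ are respectively $\le0$ and $\ge0$, so once each $\eps_v$ clears these bounds the truncations $\min(\cdot,0)$ and $\max(\cdot,0)$ in \cref{def:minmaxeps} do not interfere, and it remains to estimate the two degree-$\Delta$ products.

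\emph{Steps (ii)--(iii): product to exponential and the harmonic sum.} Writing $\eps$ for the common child error ($=\eps^{\max}_\ell\ge0$ in one case, $=\eps^{\min}_\ell\le0$ in the other), $1+x\le e^x$ gives $\prod_{i=1}^{\Delta}(1+\tfrac{\eps}{C-i})\le e^{\eps S}$, while on the negative side $|\tfrac{\eps}{C-i}|\le\tfrac2{C-\Delta}\le\tfrac17$, so the bound $\log(1+x)\ge x-x^2$ (valid for $|x|\le\tfrac12$) applies and, with $\sum_i\tfrac1{(C-i)^2}\le\Delta/(C-\Delta)^2$, yields $\prod_{i=1}^{\Delta}(1+\tfrac{\eps}{C-i})\ge e^{\eps S}\exp(-4\Delta/(C-\Delta)^2)$. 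The crux is
\[
\log\tfrac{C}{C-\Delta}\ \le\ S\ \le\ \log\tfrac{C}{C-\Delta}+\tfrac{\Delta}{2(C-\Delta)^2},
\]
which I would prove by telescoping $\log\tfrac{C}{C-\Delta}=\sum_{j=C-\Delta}^{C-1}\log\tfrac{j+1}{j}$ and comparing termwise with $S=\sum_{j=C-\Delta}^{C-1}\tfrac1j$ using $t-\tfrac{t^2}2\le\log(1+t)\le t$ at $t=\tfrac1j$, so that each summand of $S-\log\tfrac{C}{C-\Delta}$ lies in $[0,\tfrac1{2j^2}]\subseteq[0,\tfrac1{2(C-\Delta)^2}]$. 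Factoring $e^{\eps S}=e^{\eps\log\frac{C}{C-\Delta}}\cdot e^{\eps(S-\log\frac{C}{C-\Delta})}$ (and recalling that multiplying the upper bound on $S$ by the nonpositive $\eps^{\min}_\ell$ reverses the inequality) makes the second factor a quantity in $[1-\tfrac{\Delta}{(C-\Delta)^2},\,1+\tfrac{\Delta}{(C-\Delta)^2}]$.

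\emph{Step (iv) and the main obstacle.} Every leftover factor is of the form $1\pm\tfrac{c\Delta}{(C-\Delta)^2}$ (or $\exp$ of such) for an absolute constant $c\le20$; since $C-\Delta\ge\tfrac{\Delta}{e-1}$ this is $1\pm O(1/\Delta)$, and since $\tfrac1\Delta\le\tfrac{e}{(e-1)C}$ it stays within $1\pm\tfrac{10^2}{C}$ (wastefully so). Assembling, $\prod_{i=1}^{\Delta}(1+\tfrac{\eps^{\max}_\ell}{C-i})\le(1+\tfrac{10^2}{C})e^{\log\frac{C}{C-\Delta}\eps^{\max}_\ell}$ and $\prod_{i=1}^{\Delta}(1+\tfrac{\eps^{\min}_\ell}{C-i})\ge(1-\tfrac{10^2}{C})e^{\log\frac{C}{C-\Delta}\eps^{\min}_\ell}$, which combined with Step (i) are exactly the two claimed inequalities. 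I expect the only real care needed to be sign bookkeeping: the extremal child error flips between $\eps^{\max}_\ell$ and $\eps^{\min}_\ell$ across the two inequalities, the product's monotonicity in $c_v$ flips with it, and several estimates reverse when multiplied by a nonpositive quantity — so the bulk of the effort is tracking inequality directions rather than any single hard computation. The one genuinely non-mechanical input is the two-sided control of $S$; note that $S$ actually \emph{exceeds} $\log\tfrac{C}{C-\Delta}$, so the correction term $\tfrac{\Delta}{2(C-\Delta)^2}$ cannot be dropped — it is precisely what forces a nonzero $\tfrac{10^2}{C}$-type slack in the recurrence rather than an exact per-level contraction.
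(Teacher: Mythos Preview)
Your proposal is correct and follows essentially the same route as the paper's proof: both reduce via monotonicity of \eqref{eq:epsrecurrence} to the extremal case of $\Delta$ children all at $\eps^{\max}_\ell$ (resp.\ $\eps^{\min}_\ell$), bound the product by an exponential using $1+x\le e^x$ and $1+x\ge e^{x-x^2}$, compare $\sum_{i=1}^\Delta\tfrac1{C-i}$ with $\log\tfrac{C}{C-\Delta}$, and absorb the $O(1/C)$ slack into the $(1\pm 10^2/C)$ factors. Your writeup is just a bit more explicit than the paper about the monotonicity in $c_v$ and about why the truncations in \cref{def:minmaxeps} do not interfere, and your harmonic-sum bound is one-sidedly sharper (the paper uses the looser two-sided $|S-\log\tfrac{C}{C-\Delta}|\le 5/C$ from integration), but the argument is otherwise the same.
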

\begin{proof}
To start we note by integration that
\begin{align}
    \left|\sum_{i=1}^\Delta \frac{1}{C-i} - \log\left(\frac{C}{C-\Delta}\right) \right| \le 5/C.
    \label{eq:riemann}
\end{align}
Now to show the first bound note that by \eqref{eq:epsrecurrence} and $\eps_{\ell}^{\max} \ge 0$ we have
\begin{align*}
    \eps^{\min}_{\ell+1} &\ge 1 - \prod_{i=1}^{\Delta} \left(1 + \frac{1}{C-i} \eps^{\max}_{\ell}\right) \ge 1 - \prod_{i=1}^{\Delta} \exp\left(\frac{1}{C-i}\eps^{\max}_{\ell}\right) = 1 - \exp\left(\eps_{\ell}^{\max} \sum_{i=1}^\Delta \frac{1}{C-i}\right) \\
    &\ge 1 - \exp\left(\eps_{\ell}^{\max} \left( \log\left(\frac{C}{C-\Delta}\right) + \frac{5}{C}\right) \right) \ge 1 - \left(1 + \frac{10}{C}\right)\exp\left(\eps_{\ell}^{\max} \log\left(\frac{C}{C-\Delta}\right)\right),
\end{align*}
where at the end we have used that $\eps^{\max}_\ell \le 1$ and $\exp(5/C) \le 1 + 10/C$. For the second claim we start with the inequality $1+x \ge \exp(x-x^2)$ for $x \ge -1/2.$ Now using \eqref{eq:epsrecurrence} and $\eps_{\ell}^{\min} \le 0$ gives us
\begin{align*}
    \eps^{\max}_{\ell+1} &\le 1 - \prod_{i=1}^\Delta \left(1 + \frac{1}{C-i}\eps^{\min}_{\ell} \right) \le 1 - \prod_{i=1}^\Delta \exp\left(\frac{1}{C-i}\eps^{\min}_{\ell} - \frac{(\eps^{\min}_{\ell})^2}{(C-i)^2} \right) \\
    &\le 1 - \exp\left(\eps_\ell^{\min}\sum_{i=1}^\Delta \frac{1}{C-i} - \sum_{i=1}^\Delta \frac{4}{(C-i)^2} \right) \\
    &\le 1 - \exp\left(\eps_\ell^{\min} \log\left(\frac{C}{C-\Delta}\right) - \frac{50}{C} \right) \le 1 - \left(1 - \frac{100}{C}\right)\exp\left( \log\left(\frac{C}{C-\Delta}\right)\eps_\ell^{\min}\right).
\end{align*}
Here at the end we have used that $\eps^{\min}_\ell \ge -2$ and the approximation in \eqref{eq:riemann}.
\end{proof}

Now define the function $f_\delta(x) := 1 - \exp((1-\delta)x)$ for any $\delta \in [0, 1).$ Note that if $\log(C/(C-\Delta)) = 1-\delta,$ then the iteration bound in \cref{lemma:onelevel} is basically given by $f_\delta(\eps)$ up to a $O(1/C)$ additive term. The following is essentially an immediate consequence of iterating \cref{lemma:onelevel} twice.
\begin{lemma}[Two-step error bound]
\label{lemma:twolevel}
If $C$ satisfies $\log(C/(C-\Delta)) \le 1-\delta$ with $\delta\in(0,1/2)$ then 
\[ \eps_{\ell+2}^{\max} \le f_\delta(f_\delta(\eps^{\max}_\ell)) + \frac{10^3}{C}. \]
\end{lemma}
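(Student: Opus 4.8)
The plan is to obtain the bound by composing the two inequalities of \cref{lemma:onelevel} and absorbing the multiplicative $1 \pm 10^{2}/C$ factors, together with the mismatch between $\log(C/(C-\Delta))$ and $1-\delta$, into a single additive $O(1/C)$ error. First I would dispose of the small-$C$ case: since $\eps^{\max}_{\ell+2} \le 1$ always holds (as noted after \cref{def:minmaxeps}) and $f_\delta(f_\delta(\eps^{\max}_\ell)) \ge 0$ whenever $\eps^{\max}_\ell \ge 0$ (because $f_\delta$ sends $[0,\infty)$ into $(-\infty,0]$ and then $(-\infty,0]$ into $[0,\infty)$), the claimed inequality is vacuous whenever $10^{3}/C \ge 1$. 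Hence I may assume $C \ge 10^{3}$ throughout, which makes every error term below genuinely small and lets me freely use $\exp(\pm O(1/C)) = 1 \pm O(1/C)$.

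\textbf{Step 1 (level $\ell \to \ell+1$).} Apply the first inequality of \cref{lemma:onelevel}. Since $\eps^{\max}_\ell \ge 0$ and the hypothesis gives $\log(C/(C-\Delta)) \le 1-\delta$, monotonicity of $\exp$ on a nonnegative exponent yields $\exp\bigl(\log(C/(C-\Delta))\,\eps^{\max}_\ell\bigr) \le \exp\bigl((1-\delta)\eps^{\max}_\ell\bigr)$. Using also $\eps^{\max}_\ell \le 1$ (so that $\exp((1-\delta)\eps^{\max}_\ell) \le e$), this gives
\[
  \eps^{\min}_{\ell+1} \;\ge\; 1 - \Bigl(1 + \tfrac{10^{2}}{C}\Bigr)\exp\bigl((1-\delta)\eps^{\max}_\ell\bigr) \;=\; f_\delta(\eps^{\max}_\ell) - \tfrac{10^{2}}{C}\exp\bigl((1-\delta)\eps^{\max}_\ell\bigr) \;\ge\; f_\delta(\eps^{\max}_\ell) - \tfrac{300}{C}.
\]
Record also that $f_\delta(\eps^{\max}_\ell) \le 0$ (since $\eps^{\max}_\ell \ge 0$) and that $\eps^{\min}_{\ell+1} \le 0$ by \cref{def:minmaxeps}.

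\textbf{Step 2 (level $\ell+1 \to \ell+2$).} Apply the second inequality of \cref{lemma:onelevel} at level $\ell+1$. Now the exponent argument $\eps^{\min}_{\ell+1}$ is nonpositive, so the comparison $\log(C/(C-\Delta)) \le 1-\delta$ flips direction under $\exp$: $\exp\bigl(\log(C/(C-\Delta))\,\eps^{\min}_{\ell+1}\bigr) \ge \exp\bigl((1-\delta)\eps^{\min}_{\ell+1}\bigr)$, hence $\eps^{\max}_{\ell+2} \le 1 - \bigl(1 - \tfrac{10^{2}}{C}\bigr)\exp\bigl((1-\delta)\eps^{\min}_{\ell+1}\bigr)$. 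Substituting the bound $\eps^{\min}_{\ell+1} \ge f_\delta(\eps^{\max}_\ell) - 300/C$ from Step 1 (monotonicity of $\exp$ once more, splitting off $\exp(-(1-\delta)300/C) \ge 1 - 300/C$) and then using $\exp\bigl((1-\delta)f_\delta(\eps^{\max}_\ell)\bigr) \le 1$ (because $f_\delta(\eps^{\max}_\ell) \le 0$) to turn the remaining $1 - O(1/C)$ factor into an additive term, I get
\[
  \eps^{\max}_{\ell+2} \;\le\; 1 - \Bigl(1 - \tfrac{400}{C}\Bigr)\exp\bigl((1-\delta)f_\delta(\eps^{\max}_\ell)\bigr) \;=\; f_\delta\bigl(f_\delta(\eps^{\max}_\ell)\bigr) + \tfrac{400}{C}\exp\bigl((1-\delta)f_\delta(\eps^{\max}_\ell)\bigr) \;\le\; f_\delta\bigl(f_\delta(\eps^{\max}_\ell)\bigr) + \tfrac{10^{3}}{C}.
\]

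The only real subtlety, and the one place I would be careful, is the bookkeeping of inequality directions: the hypothesis $\log(C/(C-\Delta)) \le 1-\delta$ is used once on a nonnegative exponent (Step 1, where it shrinks $\exp(\cdot)$ and thereby improves the lower bound on $\eps^{\min}_{\ell+1}$) and once on a nonpositive exponent (Step 2, where it enlarges $\exp(\cdot)$ and thereby improves the upper bound on $\eps^{\max}_{\ell+2}$). Everything else is routine: repeated use of $\exp(x) \le 1 + 2|x|$ and $\exp(x) \ge 1 + x$ for $|x|$ small to merge the $1 \pm 10^{2}/C$ factors and the $\exp(\pm O(1/C))$ slacks into the single clean $10^{3}/C$ term, which is harmless because of the reduction to $C \ge 10^{3}$.
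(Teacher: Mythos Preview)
Your proof is correct and follows essentially the same two-step composition of \cref{lemma:onelevel} as the paper. The only cosmetic difference is in how the Step~1 error is carried through Step~2: the paper first writes $\eps_{\ell+2}^{\max} \le f_\delta(\eps_{\ell+1}^{\min}) + 100/C$ and then invokes $3$-Lipschitzness of $f_\delta$ on $x\le 1$ to absorb the $-300/C$ shift, whereas you substitute the Step~1 bound directly into the exponent and use $\exp(-x)\ge 1-x$; both yield the same conclusion (yours with a slightly sharper constant $400/C$).
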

\begin{proof}
By \cref{lemma:onelevel} we can start by bounding
\begin{align} \eps_{\ell+2}^{\max} \le f_{\delta}(\eps_{\ell+1}^{\min}) + \frac{10^2}{C} \exp((1-\delta)\eps_{\ell+1}^{\min}) \le f_{\delta}(\eps_{\ell+1}^{\min}) + \frac{100}{C} \label{eq:uselater2} \end{align}
as $\eps_{\ell}^{\min} \le 0$. Similarly, by \cref{lemma:onelevel} we get that
\begin{align} \eps_{\ell+1}^{\min} \ge f_{\delta}(\eps_{\ell}^{\max}) - \frac{10^2}{C} \exp((1-\delta)\eps_{\ell}^{\max}) \ge f_{\delta}(\eps_{\ell}^{\max}) - \frac{300}{C}, \label{eq:uselater1} \end{align}
as $\eps_{\ell+1}^{\max} \le 1$.
Now, note that $f_{\delta}(x)$ is $3$-Lipschitz for $x \le 1$ (as $|f_{\delta}'(x)| = (1-\delta)\exp((1-\delta)x)$). Combining \eqref{eq:uselater2}, \eqref{eq:uselater1}, and finally $3$-Lipschitzness of $f_{\delta}(x)$ gives
\begin{align*}
\eps_{\ell+2}^{\max} &\le f_{\delta}(\eps_{\ell+1}^{\min}) + \frac{100}{C} \\
&\le f_{\delta}(f_{\delta}(\eps_{\ell}^{\max}) - 300/C) + 100/C \le f_{\delta}(f_{\delta}(\eps_{\ell}^{\max})) + 1000/C.
\end{align*}
\end{proof}
\begin{lemma}
\label{lemma:numerical}
For any $\delta \in [0, 1/2)$ and $\eps \ge 0$ we have that $f_\delta(f_\delta(\eps)) \le (1-\delta)\eps.$
\end{lemma}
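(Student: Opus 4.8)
The plan is to write $a := 1-\delta \in (1/2,1]$ so that $f_\delta(x) = 1 - e^{ax}$, and to prove the bound by controlling the derivative of the composition $F(\eps) := f_\delta(f_\delta(\eps))$. First I would record that $F(0) = f_\delta(f_\delta(0)) = f_\delta(0) = 0$, so it suffices to show $F'(\eps) \le a$ for every $\eps \ge 0$; integrating this inequality from $0$ to $\eps$ then yields $F(\eps) \le a\eps = (1-\delta)\eps$, which is the claim.

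By the chain rule and $f_\delta'(x) = -ae^{ax}$,
\begin{align*}
F'(\eps) = f_\delta'(f_\delta(\eps))\, f_\delta'(\eps) = a^2\, e^{a\eps}\, e^{a f_\delta(\eps)}.
\end{align*}
The key observation is the identity $e^{a\eps} = 1 - f_\delta(\eps)$, immediate from the definition of $f_\delta$. Writing $z := f_\delta(\eps)$, this collapses the derivative into a function of the single quantity $z$: namely $F'(\eps) = a\cdot\big(a(1-z)e^{az}\big)$. Since $\eps \ge 0$ forces $z = 1 - e^{a\eps} \le 0$, it remains only to prove that $h(z) := a(1-z)e^{az} \le 1$ for all $z \le 0$.

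For this last step I would substitute $w := 1 - z \ge 1$, so that $h(z) = aw\,e^{a(1-w)} = e^{a}\cdot (aw)e^{-aw}$. Since $t e^{-t} \le e^{-1}$ for all $t \ge 0$ (the maximum being attained at $t=1$), we get $(aw)e^{-aw}\le e^{-1}$, and since $a \le 1$ we conclude $h(z) \le e^{a}\cdot e^{-1} = e^{a-1}\le 1$. Hence $F'(\eps) = a\,h(z) \le a$, and integrating from $0$ finishes the proof.

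The computation is entirely elementary, and I do not expect any real obstacle; the only point requiring a bit of care — and genuinely the heart of the estimate — is spotting the identity $e^{a\eps} = 1 - f_\delta(\eps)$, which reduces $F'(\eps)$ to a function of $z = f_\delta(\eps)\le 0$, after which the bound $h(z)\le 1$ is exactly where the hypothesis $\delta \ge 0$ (i.e. $a\le 1$) enters. If one prefers to avoid differentiating the composition, an equivalent route is to substitute $a\eps = \log(1-z)$ and reduce the claim to the one-variable inequality $1 - e^{az}\le \log(1-z)$ for $z\le 0$, which holds with equality at $z=0$ and whose left-minus-right difference is monotone on $(-\infty,0]$ by the very same estimate $a(1-z)e^{az}\le 1$.
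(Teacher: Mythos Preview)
Your proof is correct and complete. The approach, however, differs from the paper's. The paper rearranges the desired inequality to $\log(1-(1-\delta)\eps) \le (1-\delta)(1 - \exp((1-\delta)\eps))$, then Taylor-expands both sides and compares term by term, using only that $(1-\delta)\le 1$. Your route instead bounds the derivative of the composition: after the identity $e^{a\eps} = 1 - f_\delta(\eps)$ collapses $F'(\eps)$ to a function of $z=f_\delta(\eps)\le 0$, the single-variable maximum $te^{-t}\le e^{-1}$ finishes. Both arguments are short and use the hypothesis $\delta\ge 0$ in the same place (to get the contraction factor $\le 1$); neither actually needs $\delta<1/2$. One small advantage of your derivative argument is that it works uniformly for all $\eps\ge 0$, whereas the paper's logarithmic rearrangement tacitly requires $(1-\delta)\eps<1$ for $\log(1-(1-\delta)\eps)$ to be defined --- the remaining range is trivial since $f_\delta(f_\delta(\eps))\le 1$, but your argument simply sidesteps the case split.
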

\begin{proof}
We need to argue that \[ 1 - \exp((1-\delta)(1-\exp((1-\delta)\eps))) \le (1-\delta)\eps. \]
This can be rearranged as
\[ \log(1-(1-\delta)\eps) \le (1-\delta)(1 - \exp((1-\delta)\eps)). \]
Taking a Taylor expansion and negating both sides gives the equivalent inequality
\[ \sum_{i=1}^\infty \frac{(1-\delta)^i\eps^i}{i!} \ge \sum_{i=1}^\infty \frac{(1-\delta)^{i+1}\eps^i}{i!}. \]
This is true term by term for $\eps \ge 0$ as desired.
\end{proof}
We can now use this claim to show \cref{thm:matchingtreelike}.
\begin{proof}[Proof of \cref{thm:matchingtreelike}]
We claim that if $C > \left(\frac{e}{e-1}+\delta\right)\Delta$, then $\log(C/(C-\Delta)) \le 1-\delta$, as required by \cref{lemma:twolevel}.
This latter condition is equivalent to $C \ge \frac{\exp(1-\delta)}{\exp(1-\delta)-1}\Delta.$ For $\delta < 1/20$, we have
\begin{align*} \frac{\exp(1-\delta)}{\exp(1-\delta)-1} &= 1 + \frac{1}{\exp(1-\delta)-1} \le 1 + \frac{1}{e(1-\delta)-1} \\
&= \frac{e}{e-1} + \frac{e\delta}{(e-1)(e-1-e\delta)} \\
&\le \frac{e}{e-1} + \delta \cdot \frac{e}{(e-1)(e-1-e/20)} < \frac{e}{e-1} + \delta.
\end{align*}
By \cref{lemma:obliviouschoice} we know that we can lower bound the probability that edge $e$ is matched by the probability edge $e$ is matched in some edge matching game (\cref{def:game}) where edges not in the witness tree are ignored, and the boundary edges are fixed obliviously beforehand.

Let us now bound the probabilities that $u, v$ are matched to a vertex below them. Note that these events are independent. Recall that we know that $\eps_u \le \eps^{\max}_g$ by definition.
We prove on induction on $\ell = 0, ..., g$ that
\[
\eps^{\max}_{2 \ell} \le \left( 1 - \frac{10^3}{\delta C} \right) (1 - \delta)^{\ell} + \frac{10^3}{\delta C} .
\]
For $\ell=0$ (base case) we have $\eps^{\max}_0 \le 1$.
Next, by \cref{lemma:twolevel,lemma:numerical} we know that
\begin{align*}
    \eps^{\max}_{2\ell+2} &\le f_{\delta}(f_{\delta}(\eps^{\max}_{2\ell})) + 10^3/C \\
    &\le (1-\delta)\eps^{\max}_{2\ell} + 10^3/C \\
    &\le \left( 1 - \frac{10^3}{\delta C} \right) (1 - \delta)^{\ell+1} + (1-\delta) \frac{10^3}{\delta C} + \frac{10^3 \delta}{\delta C} .
\end{align*}
For $\ell = g' \ge (g-1)/2$ we get
\begin{align} \max(\eps_u, \eps_v) \le \eps^{\max}_{2g'} \le \left(1 - \frac{10^3}{\delta C}\right)(1-\delta)^{(g-1)/2} + \frac{10^3}{\delta C} \le (1-\delta)^{(g-1)/2} + \frac{1000}{\delta C} \label{eq:epsbound} \end{align}

By \eqref{eq:epsbound}, the probability that edge $e$ is matched is at least
\begin{align*}
    q_u \cdot q_v \cdot \frac{C}{(C-c_u)(C-c_v)} 
    &= \frac{C-c_u}{C}(1-\eps_u) \cdot \frac{C-c_v}{C}(1-\eps_v) \frac{C}{(C-c_u)(C-c_v)} \\
    &\ge \frac{1}{C}\left(1 - (1-\delta)^{(g-1)/2} - \frac{1000}{\delta C}\right)^2.\qedhere
\end{align*}
\end{proof}

Given this it is essentially immediate to deduce our main result.
\begin{proof}[{Proof of \cref{thm:main}}]
To use the reduction of \cref{lemma:matching_reduction},
we will give an algorithm for online matching that,
given a graph of maximum degree $\Delta$,
matches every edge with probability at least $1/(\alpha \Delta)$,
with $\alpha = \frac{e}{e-1} + 3 \delta$
where $\delta$ will be determined at the end.

Our online matching algorithm is the following.
We first run \cref{algo:subsample} for some choice $\Delta' < \Delta$,
obtaining a graph $G'$,
and then run \cref{algo:matching} on $G'$, for the choice $C = \left(\frac{e}{e-1} + \delta\right)\Delta'$. Our goal is to pick $\delta$ as small as possible so that every edge $e$ is matched with
the required probability.

We will say that an edge $e \in E(G)$ is good if $e \in E(G')$ and its $g$-neighborhood in $G'$ contains no cycles ($g$ will also be determined below).
By \cref{lemma:degree_reduction},
the probability (with respect to randomness in \cref{algo:subsample}) that $e$ is good is at least
$\left( 1 - 5 \sqrt{\frac{\log \Delta'}{\Delta'}} - 3 \frac{(\Delta')^{5g}}{\Delta} \right) \frac{\Delta'}{\Delta}$.
By \cref{thm:matchingtreelike},
if $e$ is good, then it is matched
with probability (with respect to randomness in \cref{algo:matching}) at least
$\frac{1}{\left(\frac{e}{e-1} + \delta\right)\Delta'}\left(1 - (1-\delta/4)^{(g-1)/2} - \frac{10^4}{\delta C}\right)^2$.

To ensure that the product of these two probabilities is large enough,
we will choose $\delta, \Delta', g$ to attain the following bounds:
\begin{equation} \label{eq:bounds}
\max\left(5\sqrt{\frac{\log \Delta'}{\Delta'}}, \frac{3(\Delta')^{5g}}{\Delta}, (1-\delta/4)^\frac{g-1}{2}, \frac{10^4}{\delta C} \right) \le \frac{\delta}{10} .
\end{equation}
Then $e$ is matched with probability at least
\[
\left( 1 - \frac{\delta}{10} - \frac{\delta}{10} \right) \frac{\Delta'}{\Delta} \cdot \frac{1}{\left(\frac{e}{e-1} + \delta\right)\Delta'}\left(1 - \frac{\delta}{10} - \frac{\delta}{10} \right)^2
\ge
\frac{ 1 - \frac{3 \delta}{5}}{\left(\frac{e}{e-1} + \delta\right)\Delta}
\ge
\frac{ 1 }{\left(\frac{e}{e-1} + 3 \delta\right)\Delta} \,,
\]
where the last inequality is true for $\delta < \frac{1}{2}$.
Finally, the reduction of \cref{lemma:matching_reduction}
yields an edge coloring algorithm that,
for a graph of maximum degree $\Delta$,
outputs an
$\left(\frac{e}{e-1} + 3 \delta + O((\log n/\Delta)^{1/4})\right) \Delta$
edge coloring with high probability.

It remains to set $\delta, \Delta', g$ so as to satisfy \eqref{eq:bounds}.
First we set $\Delta' = 10^6 \delta^{-3}.$ This implies that $5\sqrt{\frac{\log \Delta'}{\Delta'}} \le \delta/10$, and $\frac{10^4}{\delta C} \le \frac{10^4}{\delta \Delta'} \le \delta/10$, as $C \ge \Delta'$. Additionally we set $g = \frac{100}{\delta}\log(100/\delta)$, so that
\[ (1-\delta/4)^{(g-1)/2} \le (1-\delta/4)^{\frac{40}{\delta}\log(100/\delta)} \le \exp(-\log(100/\delta)) \le \delta/10. \]
The final condition to check is that $3(\Delta')^{5g}/\Delta \le \delta/10.$ This is equivalent to
\[ 3(10^6\delta^{-3})^{\frac{500}{\delta}\log(100/\delta)} \le \Delta, \] so we may set $\delta = A\frac{(\log \log \Delta)^2}{\log \Delta}$ for some sufficiently large constant $A$.
\end{proof}

\subsection{Tightness of our analysis}
\label{subsec:tight}
We end this section with an informal discussion of why we believe that the analysis of our algorithm is tight barring major changes in the analysis style. In particular, we argue why there is a degree-$\Delta'$ graph and an edge $e$ such that running our algorithm on it will cause $e$ to be matched with probability $(1-\Omega(1))/C$ if $C < e/(e-1)\Delta'.$ Thus, if the analysis of our algorithm is improvable, then we must leverage additional properties of the subsampling procedure beyond the $\Delta'$ maximum degree bound and the fact that the neighborhood of each edge $e$ is a tree.

Consider a graph of maximum degree $\Delta'$ which is a tree rooted at an edge $e$, except that at the leaves, we add a gadget which is not a tree (i.e., it has a somewhat short cycle) and is a boundary edge for the tree rooted at $e$. If the algorithm processes the cycle edges first, then the ``errors'' $\eps_c$ for boundary edges $c$ in the cycles will all be $+\eps$. By propagating the errors upwards via the formula \eqref{eq:upwards}, the errors at higher levels will converge towards the order-two fixed point of $f(f(x))$ (for $f(x)$ defined above). Thus, $e$ will be sampled with probability $(1-\Omega(1))/C$.

\section{Conclusions, Open Problems, and Possible Improvements}
\label{sec:conclusions}
In this paper we have presented the first nontrivial algorithm for edge-coloring graphs in the online edge arrival setting against oblivious adversaries. In particular, we prove that one can use $(e/(e-1)+o(1))\Delta$ colors with high probability. We believe that the conjecture of Bar-Noy, Motwani, and Naor \cite{BMN92} is true, and that there is an online algorithm which requires only $\Delta + O(\sqrt \Delta \log n)$ colors.
However, it is not clear to us whether a $(1+o(1))\Delta$-edge coloring can be achieved in the \emph{adaptive} adversary setting. A good starting point would be to show a lower bound against deterministic algorithms.

We end by briefly discussing a concrete strategy based on the methods of this paper to give a $(1+o(1))\Delta$-edge coloring algorithm in the online setting. First note that the reduction to locally treelike graphs in fact shows that it suffices to consider the following approximate version of the problem: given a graph $G$ with maximum degree $\Delta$ and arbitrarily large girth (in terms of $\Delta$), can one partially edge-color $G$ with $\Delta$ colors such that every edge is colored with probability $1-o_{\Delta}(1)$? Note that a greedy algorithm gives a probability bound of $1/2-o_{\Delta}(1)$, while our matching-based algorithm can give $(e-1)/e-o_{\Delta}(1)$. 

As discussed in \cref{subsec:tight}, we believe that our current algorithm is the limit of ``local'' algorithms with only two states (matched or unmatched) and therefore the key issue is analyzing ``local'' algorithms with more than two states. For concreteness, we believe the following algorithm colors well on trees of maximum degree $\Delta$: with probability $\epsilon$ leave an edge blank and with probability $1-\epsilon$ color it with a random color among the set of available colors (leaving the edge blank in the case there are no available colors). Note here that there are $\Delta$ colors and hence there are $\Delta+1$ states. This algorithm can hypothetically be analyzed in the framework given in this paper; in particular the reduction to understanding the corresponding ``adversarial'' game on trees remains unchanged. The key issue is in analyzing the ``adversarial'' game.

For analyzing the adversarial game, our analysis relies on monotonicity (\cref{lemma:obliviouschoice}) in order to reduce to a non-adaptive adversary which can be understood through tree recurrences. The key issue is that the algorithm described above is not obviously monotone in any parameter and hence understanding the probability that an edge is unmatched via tree recurrences is substantially more difficult. This lack of monotonicity is closely related to one reason why the algorithm of Weitz \cite{Wei06} for estimating the partition function of the hardcore model is not currently known to extend to the setting of non-ferromagnetic models (e.g.,~approximating the number of colorings of a $\Delta$-regular graph). Furthermore, note that using tree recurrences one can still analyze adversaries which are non-adaptive, and we believe that in this setting, given large girth, one could apply tree recurrences to analyze the algorithm above. However, such a result for ``non-adaptive'' adversaries, without monotonicity, does not imply anything regarding the original algorithm. 

On the positive side, one plausible reason why the analysis of an algorithm with more than two states may be tractable is due to the fact that we can assume that our graphs are high-girth. This could potentially parallel the analysis of Glauber dynamics for vertex-coloring graphs with maximum degree $\Delta$ using $(1+o(1))\Delta$ colors. While such a result on the mixing time of Glauber dynamics on graphs with large girth (in terms of the maximum degree) is not known in the literature, several results in this direction are known \cite{HV03,GMV05,HV06,FGYZ21}. Hence we suspect that a result of this form may be within reach via spectral or coupling-based techniques \cite{anari2021spectral} for analyzing Markov chains. However, it is currently unclear to us how to apply such spectral or coupling-based techniques to analyze the random process arising in the edge-coloring setting.

\section{Acknowledgements}
Janardhan Kulkarni would like to thank Sayan Bhattacharya, Fabrizio Grandoni, and David Wajc for initial discussions on this problem.

Yang P. Liu was partially supported by by the Department of Defense (DoD) through the National Defense Science and Engineering Graduate Fellowship. Sah and Sawhney were supported by NSF Graduate Research Fellowship Program DGE-1745302. Part of this work was done while Liu and Sawhney were interns at Microsoft Research, Redmond.

\bibliographystyle{alpha}
\bibliography{main.bib}

\appendix

\section{Random Edge Order Arrival}
\label{sec:randomorder}
\subsection{Sparsification}
We now introduce the sparsification procedure used in the random order case. The proof here is more delicate than \cref{algo:subsample} as we need to preserve various properties of being a randomly ordered subset in the subsampling algorithm. 

\begin{algorithm}[!ht]
\caption{Divides the graph $G$ of maximum degree at most $\Delta$ into a series of maximum degree approximately $\Delta'$ graphs. \label{algo:subsample-random}}
\SetKwProg{Proc}{procedure}{}{}
\Proc{$\textsc{Split}(G, \Delta',\Delta)$}{
    $T\assign \lceil \Delta/\Delta'\rceil$\\
    $E \assign 3\sqrt{\Delta'\log \Delta'}$\\
    $G_1,G_2,\ldots,G_{T},R \assign \emptyset.$ \Comment{Initially empty subgraphs.} \\
    $m_v^{1},\ldots, m_v^{T}\assign 0$ for $v \in V(G).$ \Comment{Number of adjacent marked edges to vertex $v$ of each color} \\
    \For{$i = 1, \dots, m$}{
        \tcp{Edge $e_i = (u, v)$ arrives}
        $C \assign \tsc{unif}(\{1,\ldots,T\})$ \Comment{We say edge $e$ is marked color $C$.} \\
        $m_u^{C} \assign m_u + 1, m_v^{C} \assign m_v + 1.$\\
            \If{$m_u \le \Delta' + E$ \emph{and} $m_v \le \Delta'+ E$}{
                $E(G_C) \assign E(G_C) \cup \{ e \}$ \\
            }
            \Else{
            $E(R) \assign E(R) \cup \{ e \}$
            }
    }
}
\end{algorithm}

\begin{lemma}\label{lem:random-degree-reduction}
Let $\Delta'\ge 2$. Running \cref{algo:subsample-random} on graph $G$ with $n$ vertices and maximum degree at most $\Delta$ outputs a uniformly random partition of the edges of $G$ into $T$ subgraphs $G_1', \dots, G_T'$ (i.e.~edge $e$ is in each $G_i'$ with probability exactly $1/T$), and subgraphs $G_i \subseteq G_i'$ for $i \in [T]$ satisfying the following properties.
\begin{itemize}
    \item The maximum degree in $G_i$ is $\Delta'+3\sqrt{\Delta' \log \Delta'}.$
    \item With probability at least $1-1/n$ we have that for all $v \in V(G)$ that \[ \deg_G(v) - \sum_{i \in [T]} \deg_{G_i}(v) \le  C_{\ref{lem:random-degree-reduction}}(\Delta/(\Delta')^2 + \sqrt{\Delta \log n}). \]
\end{itemize}
\end{lemma}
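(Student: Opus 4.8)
The plan is to analyze \cref{algo:subsample-random} by tracking, for each vertex $v$, how many of its incident edges fail to land in some $G_i$ (i.e.\ end up in the reject set $R$). Call such an edge \emph{rejected at $v$}; note that the left-hand side $\deg_G(v) - \sum_{i}\deg_{G_i}(v)$ counts, for each incident edge $e=(u,v)$, the number of colors $C$ among $\{1,\dots,T\}$ such that $e$ would have joined $G_C$ but was excluded — but since each edge receives exactly one color $C$, this quantity is exactly the number of edges incident to $v$ that are placed in $R$ (rejected). So it suffices to bound, for every vertex $v$, the number of rejected edges incident to $v$ by $C_{\ref{lem:random-degree-reduction}}(\Delta/(\Delta')^2 + \sqrt{\Delta\log n})$ with probability $\ge 1-1/n$; a union bound over the $n$ vertices reduces this to a $\ge 1 - 1/n^2$ statement per vertex.

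Fix $v$ and a color $i \in [T]$. The edges incident to $v$ that are marked color $i$ form a random subset: each of the $\deg_G(v)\le\Delta$ incident edges is marked $i$ independently with probability $1/T \approx \Delta'/\Delta$, so the number marked $i$ is concentrated around $\Delta'$ (mean at most $\Delta'$, with Chernoff deviation $O(\sqrt{\Delta'\log n})$). An incident edge marked color $i$ is rejected \emph{at $v$} only if, at the time it arrives, the running count $m_v^{i}$ already exceeds $\Delta' + E$ with $E = 3\sqrt{\Delta'\log\Delta'}$; so the number of edges rejected at $v$ in color $i$ is at most $\max(0,\ (\#\text{incident edges marked }i) - \Delta' - E)$. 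First I would argue that for a \emph{single} $v$ and $i$, the probability that the count of incident edges marked $i$ exceeds $\Delta' + E$ is tiny: by a Chernoff bound with $E = 3\sqrt{\Delta'\log\Delta'}$ the excess probability is roughly $\exp(-\Theta(E^2/\Delta')) = \exp(-\Theta(\log \Delta'))$, i.e.\ polynomially small in $\Delta'$ — this controls the \emph{typical} color. But $\Delta'$ may be much smaller than $n$, so this per-color bound is not by itself strong enough for a union bound over all $T \le \Delta$ colors and all $n$ vertices.

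The right way to aggregate is therefore to bound the \emph{total} number of edges incident to $v$ that are rejected, $N_v := \sum_{i\in[T]} \max(0, X_{v,i} - \Delta' - E)$, where $X_{v,i}$ is the number of incident edges marked $i$. Here $\sum_i X_{v,i} = \deg_G(v)\le\Delta$ and each $X_{v,i}$ has mean $\le \Delta'$. The quantity $N_v$ is a function of the color assignments of the $\le\Delta$ edges incident to $v$, and changing one edge's color changes $N_v$ by at most $1$, so by McDiarmid/Azuma $N_v$ concentrates around its mean within $O(\sqrt{\Delta\log n})$ with probability $\ge 1 - 1/n^2$. It remains to bound $\mathbb{E}[N_v]$. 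Writing $X_{v,i}$ as a Binomial-like sum, $\mathbb{E}[\max(0, X_{v,i} - \Delta' - E)] \le \sum_{k > \Delta' + E}\mathbb{P}[X_{v,i}\ge k]$, and summing the Chernoff tail over $k$ and then over the $T \le 2\Delta/\Delta'$ colors gives $\mathbb{E}[N_v] = O(T \cdot \text{poly}(\Delta') \cdot \Delta'^{-\Omega(1)})$; the point is that the per-color contribution decays fast enough in $\Delta'$ (faster than any fixed polynomial) that multiplying by $T \le \Delta$ still leaves something of size $O(\Delta/(\Delta')^2)$, which is exactly the first term in the target bound — note $E$ is chosen precisely so that $3\sqrt{\Delta'\log\Delta'}$ kills enough of the tail. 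Combining $\mathbb{E}[N_v] = O(\Delta/(\Delta')^2)$ with the $O(\sqrt{\Delta\log n})$ concentration window and a union bound over $v$ yields the claim. Separately, the first bullet (max degree of $G_i$ at most $\Delta' + 3\sqrt{\Delta'\log\Delta'}$) and the ``uniformly random partition'' statement are immediate from the code: the color $C$ is drawn uniformly and independently for each edge, and the explicit degree cap $m_v^C \le \Delta' + E$ in the \textbf{if} guard forces the degree bound in each $G_i$.

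I expect the main obstacle to be the interplay between the scales: $\Delta'$ is an arbitrarily slowly growing parameter, so per-color tail bounds alone are useless, and one must pass to the aggregate $N_v$ and show its \emph{expectation} is small (size $\Delta/(\Delta')^2$) while its \emph{fluctuation} is the larger but still acceptable $\sqrt{\Delta\log n}$ — getting the bookkeeping right so that the slack $E=3\sqrt{\Delta'\log\Delta'}$ is exactly what's needed to make $\mathbb{E}[N_v]$ summable against the factor $T\le\Delta$ is the delicate point. A secondary subtlety is that $X_{v,i}$ is not literally a sum of independent indicators under a truly adversarial edge order — but since \cref{algo:subsample-random} assigns colors i.i.d.\ uniformly independent of the arrival order, the relevant randomness (the colors of the $\deg_G(v)$ edges at $v$) is genuinely a multinomial, so Chernoff and the bounded-differences argument apply cleanly.
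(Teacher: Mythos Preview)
Your proposal has a real gap: an edge $e=(u,v)$ is placed in $R$ whenever \emph{either} endpoint's counter $m_u^C$ or $m_v^C$ exceeds $\Delta'+E$, but your quantity $N_v=\sum_i\max(0,X_{v,i}-(\Delta'+E))$ only bounds rejections triggered by $v$'s own counters. It says nothing about edges $(u,v)$ rejected because a neighbor $u$ received too many color-$C$ edges. The quantity you actually need---the total number of rejected edges incident to $v$---is therefore \emph{not} a function of the $\le\Delta$ color choices at $v$ alone; it depends on the colors of all edges incident to all neighbors of $v$, i.e.\ on up to $\Theta(\Delta^2)$ independent coordinates, and McDiarmid then only gives a fluctuation window of order $\sqrt{\Delta^2\log n}$, which is vacuous.

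The paper closes this gap with a two-stage argument. First it proves a \emph{global} statement: call a color $i$ \emph{bad at $w$} if $w$ receives at least $\Delta'+E$ edges of color $i$. A Chernoff bound (this is where the choice $E=3\sqrt{\Delta'\log\Delta'}$ is used) makes each color bad with probability at most $1/(\Delta')^2$, and since the indicators over $i\in[T]$ are negatively associated, Hoeffding plus a union bound over vertices shows that with high probability \emph{every} vertex simultaneously has at most $\ell:=\Delta/(\Delta')^2+O(\sqrt{T\log n})$ bad colors. Second, conditioning on this event holding as the edges are processed, each arriving edge $(v,u_j)$ is rejected only if its random color lands among the at most $2\ell$ colors bad at $u_j$ or at $v$; a stochastic-domination/Chernoff step over the $\le\Delta$ edges at $v$ then gives the $O(\ell+\sqrt{\Delta\log n})$ bound. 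The idea missing from your plan is precisely this first global step controlling bad colors at all vertices at once---without it you cannot handle the rejections at $v$ that are caused by $v$'s neighbors.
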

\begin{proof}
The first bullet point follows directly by the algorithm.

The second bullet point is slightly more intricate. We consider a fixed vertex $v$ and note that for an edge $e = (u,v)$ of color $i$ to be deleted we have that either the endpoint $u$ or $v$ has more than $\Delta' + 3\sqrt{\Delta'\log \Delta'}$ edges marked with color $i$. We say that a vertex $v$ is bad with respect to the color $i$ if at least $\Delta' + 3\sqrt{\Delta'\log \Delta'}$ edges have been marked with the color $i$. Fix the ordering of edges $e_1,\ldots,e_m$ and let $\mc{E}_{e}$ denote the event that there are at least $\ell$ bad colors in the graph for some vertex in the graph $G$ when $e$ is processed. Note $\mc{E}_{e_i}\subseteq\mc{E}_{e_j}$ for $i\le j$ since the set of processed edges grows. Let $Y_v$ denote the number of deleted edges at the vertex $v$ and suppose the edges of $v$ are presented as $(v,u_1),\ldots,(v,u_{\deg{v}})$. Let $X_v^{u_i} = Y_v\mbm{1}_{\mc{E}_{(v,u_i)}^{c}}$ and note that $X_v^{u_{i+1}}\le X_v^{u_i}+1$ and that $\mb{P}[X_v^{u_{i+1}} = X_v^{u_i}+1] \le (2\ell)/\Delta$ as there are at most $\ell$ bad colors at $u_i$ and $v$ before $(v,u_i)$ is processed if $\mbm{1}_{\mc{E}_{(v,u_i)}^{c}}$ holds. Therefore we have $\mb{P}[X_v^{u_{\deg(v)}}\ge 2\ell + \sqrt{\Delta}t]\le \exp(-\Omega(t^2))$ by stochastic domination and Chernoff.

We now bound the event $\mc{E}_{e}$ for the final edge $e$ in the ordering; note that if $\mc{E}_{e}^c$ holds then $X_v^{u_{\deg(v)}}$ serves as an upper bound for the number of deleted edges at the vertex $v$. Let $Z_i$ be the number of edges emanating from $v$ marked $i$.  Note that $(Z_i)_{i\in[T]}$ are negatively associated (see e.g.~\cite[Proposition~E.6]{BGW21}) and if $Y_i = \mbm{1}_{Z_i\ge\Delta' + 3\sqrt{\Delta'\log(\Delta')}}$ by Chernoff $\mb{P}[Y_i = 1]\le 1/(\Delta')^2$. Hence the expected number of bad colors at a vertex is bounded by $\Delta/(\Delta')^2$, and thus by Hoeffding's inequality for negatively associated random variables we see that $\mb{P}[\sum_{i\in T}Y_i\ge \Delta/(\Delta')^2 + \sqrt{T}t]\le\exp(-\Omega(t^2))$. Setting $t$ to be a sufficiently large multiple of $\sqrt{\log n}$, we see that with probability at most $1/n^2$ no vertex has more than $\Delta/(\Delta')^2 + C\sqrt{T\log n}$ bad colors. Setting $\ell$ as such and $t$ in the previous bound we find that with probability $1-1/n$ the omitted edges at every vertex is less than $C(\Delta/(\Delta')^2 + \sqrt{\Delta \log n})$.
\end{proof}

\subsection{Tree recurrence calculation}\label{sec:tree-1}

We first describe a simple algorithm for sampling a matching on a graph. We then argue that each edge is matched with the desired probability on any tree $G$. Fix a sampling parameter $C$. When an edge $e = (u,v)$ arrives, if it has no neighbors in the matching we add it in with probability $\frac{C}{(C-d_u)(C-d_v)}$, where $d_u$ and $d_v$ are the degrees of $u$ and $v$ in the edges that have already arrived. Note that if $C > \Delta(G) + 2\sqrt{\Delta(G)}$ the the sampling probability is in $[0, 1]$ so it is well-defined. This is \cref{algo:matching}.

\begin{lemma}\label{lem:tree-exact-probability}
If one runs \cref{algo:matching} on a tree, then for every $e\in E(G)$, it is included in the matching with probability exactly $1/C$.
\end{lemma}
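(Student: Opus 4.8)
The plan is to prove \cref{lem:tree-exact-probability} by induction on the structure of the tree, processing edges in order of arrival and tracking, for each vertex, the probability that it has been matched by one of its already-arrived incident edges. Since $G$ is a tree, when an edge $e=(u,v)$ arrives, removing $e$ from the set of arrived edges disconnects the arrived-edge subgraph into two pieces: one containing $u$ and one containing $v$. Hence the event ``$u$ is already matched'' depends only on edges in $u$'s component, and ``$v$ is already matched'' only on edges in $v$'s component; crucially these two events are \emph{independent}.

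First I would set up the induction: order the edges $e_1,\dots,e_m$ by arrival, and prove by induction on $i$ that each $e_j$ with $j\le i$ is matched with probability exactly $1/C$ (equivalently, that the claim holds for the graph restricted to the first $i$ edges, which is a forest). For the inductive step, consider edge $e_i=(u,v)$ arriving with $d_u$ and $d_v$ its degrees among $e_1,\dots,e_{i-1}$. The key quantity is $p_u:=\mb{P}[u \text{ unmatched just before } e_i]$. Since $u$'s $d_u$ previously-arrived incident edges are pairwise vertex-disjoint at $u$ and, by the tree property, the events of each being matched involve disjoint edge sets (distinct subtrees hanging off $u$), those $d_u$ matching events are mutually exclusive \emph{and} each occurs with probability exactly $1/C$ by the inductive hypothesis. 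Therefore $p_u = 1 - d_u/C = (C-d_u)/C$, and symmetrically $p_v=(C-d_v)/C$. By independence of the two events, $e_i$ is matched iff both $u,v$ are unmatched (probability $(C-d_u)(C-d_v)/C^2$) and then the independent coin in line~\ref{line:checkr} comes up, which by the choice $C>\Delta(G)+2\sqrt{\Delta(G)}$ is a valid probability $C/((C-d_u)(C-d_v))\in[0,1]$. Multiplying gives
\begin{equation*}
\frac{(C-d_u)(C-d_v)}{C^2}\cdot\frac{C}{(C-d_u)(C-d_v)} = \frac{1}{C},
\end{equation*}
completing the induction.

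The main point requiring care — the ``obstacle,'' though it is a mild one — is justifying the two structural facts that the tree hypothesis buys us: (i) the independence of ``$u$ already matched'' and ``$v$ already matched,'' and (ii) the mutual exclusivity of the $d_u$ matching events at $u$. Both follow from the observation that in a forest, the edge sets of the distinct connected components obtained after deleting a vertex (or an edge) are disjoint, and the algorithm's random choices for an edge depend only on the history of edges within its own component up to that time. I would state this as a short lemma or an explicit paragraph: formally, one argues that the random variable ``matching restricted to component $K$'' after processing $e_1,\dots,e_{i-1}$ is a deterministic function of $\{R_j : e_j\in K, j<i\}$, and these variable-sets are disjoint across components, hence independent. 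Everything else is the routine algebra displayed above; no further estimates are needed since the probabilities are exact.
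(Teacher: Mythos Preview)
Your proposal is correct and takes essentially the same approach as the paper: induction on arrival order, using that the events ``$u$ already matched'' and ``$v$ already matched'' are independent (different components of the forest before $e$ arrives) and that the events ``$(u,u_i)$ matched'' for $i=1,\dots,d_u$ are mutually exclusive (they share the vertex $u$), each with probability $1/C$ by induction, yielding the displayed product. Your write-up is in fact more careful than the paper's in spelling out why independence holds; the only minor over-justification is that mutual exclusivity of the $d_u$ events at $u$ needs no tree hypothesis---it is immediate from the definition of a matching.
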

\begin{proof}
We prove this via induction on the position of $e$. Clearly if it is the first edge, it is in with probability $C/C^2 = 1/C$. Now suppose it appears later. In order for $e = (u,v)$ to be in the matching, none of the adjoining prior edges may be chosen. Suppose they are $(u,u_1),\ldots,(u,u_{d_u})$ and $(v,v_1),\ldots,(v,v_{d_v})$. By induction, each $(u,u_i)$ appears with probability $1/\Delta$. These are clearly disjoint events by definition, so with probability $1-d_u/C$ no such edge has been included yet. Similarly, with probability $1-d_v/C$ no edge connected to $v$ has been included yet. These two events are clearly independent (as the corresponding graphs are disconnected), so the probability $e$ is included is easily seen to equal
\[\Big(1-\frac{d_u}{C}\Big)\Big(1-\frac{d_v}{C}\Big)\cdot\frac{C}{(C-d_u)(C-d_v)} = \frac{1}{C}.\qedhere\]
\end{proof}

We next use this to describe a coloring algorithm on trees which provides the necessary guarantees.

\begin{algorithm}[!ht]
\caption{Computes an approximate coloring on a graph $G$ with a fixed order on the edges $e_1,\ldots,e_m$ and a maximum degree at most $\Delta$. \label{algo:random-coloring}}
\SetKwProg{Proc}{procedure}{}{}
\Proc{$\textsc{Tree-Coloring}(G)$}{
    $G\assign \{e_1,\ldots,e_m\}$\\
    $R \assign \{\}$\\
    $C = \Delta + \Delta^{3/4}$
    \For{$i = 1, \dots, \Delta$}{
        Remove all vertices in $G$ of degree at least $C$ and add to $R$ (with edges) \label{line:remove2} \\
        Run $\textsc{Matching}(G, C)$ and let the output matching be $\mc{M}_i$\\
        $G\assign G\setminus \mc{M}_i$\\
        $C\assign C-1+\Delta^{-1/12}$\\
    }
    Output $G\cup R$ \Comment{Non-colored edges}
}
\end{algorithm}

As written the above algorithm is sequential, however we will prove that the above algorithm can be simulated online and that it provides the necessary coloring guarantee.

\begin{lemma}\label{lem:random-color-tree}
If one runs \cref{algo:random-coloring} with $\Delta\ge\Delta_{\ref{lem:random-color-tree}}$ on a tree of maximum degree at most $\Delta$, then for every $e\in E(G)$ it is output as a non-colored edge with probability at most $\Delta^{-1/24}$. Additionally, \cref{algo:random-coloring} can be implemented online for any graph.
\end{lemma}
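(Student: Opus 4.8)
The plan is to separate the two ways the edge $e=(u,v)$ can be output uncolored: either an endpoint of $e$ is removed to $R$ in line~\ref{line:remove2} of \cref{algo:random-coloring}, or $e$ survives all $\Delta$ rounds without ever being matched. Write $C_t := (\Delta+\Delta^{3/4}) - (t-1)(1-\Delta^{-1/12})$ for the value of the parameter $C$ used in round $t$; note $C_1=\Delta+\Delta^{3/4}$ exceeds the maximum degree, and $C$ decreases by strictly less than $1$ per round. The key ingredient for both parts is a concentration estimate for vertex degrees. For a fixed vertex $w$, its degree $\deg_t(w)$ at the start of round $t$ performs a pure death chain: in round $s$ it drops by $1$ with conditional probability $\deg_s(w)/C_s$, which by \cref{lem:tree-exact-probability} is exactly $\Pr[w\text{ matched in round }s]$ (each incident edge is matched with probability $1/C_s$, and these events are disjoint). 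Hence by Azuma's inequality---handling the self-referential drift by comparison with a supermartingale, i.e.\ the differential-equation method---$\deg_t(w)$ stays within $O(\sqrt{\Delta\log\Delta})$ of the deterministic trajectory $\overline d_t(w):=\deg_G(w)\prod_{s<t}(1-1/C_s)$ simultaneously for all $t$, with probability $1-\exp(-\Omega(\Delta^{1/3}))$.

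The next step is a telescoping calculation placing $\overline d_t(w)$ safely below $C_t$. Since $C_t-1\le C_{t+1}$ we get $\prod_{s<t}(1-1/C_s)<\prod_{s<t}C_{s+1}/C_s=C_t/C_1$, so $\overline d_t(w)\le \Delta\cdot C_t/C_1=C_t/(1+\Delta^{-1/4})\le C_t-\tfrac{1}{2} C_t\Delta^{-1/4}$; combined with $C_t\ge C_{\Delta+1}=\Delta^{3/4}+\Delta^{11/12}\ge\Delta^{11/12}$ this gives a buffer $C_t-\overline d_t(w)\ge \tfrac{1}{2}\Delta^{2/3}$, which for large $\Delta$ dominates the $O(\sqrt{\Delta\log\Delta})$ fluctuation. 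On the resulting good event, $u$ and $v$ never reach degree $\ge C_t$, so $e$ is never removed. The same telescoping bounds the survival term: $\prod_{t=1}^{\Delta}(1-1/C_t)<C_{\Delta+1}/C_1=\tfrac{\Delta^{3/4}+\Delta^{11/12}}{\Delta+\Delta^{3/4}}\le 2\Delta^{-1/12}$.

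For the survival term itself, condition on the history through round $t-1$; the round-$t$ randomness is fresh, and on the good event the current graph is a forest on which \cref{algo:matching} runs with all sampling probabilities in $[0,1]$, so \cref{lem:tree-exact-probability} gives that $e$ (if still present) is matched in round $t$ with probability at least $1/C_t$, independently of earlier rounds. A stopping-time argument then yields $\Pr[e\text{ survives all }\Delta\text{ rounds uncolored}]\le \exp(-\Omega(\Delta^{1/3}))+\prod_{t=1}^{\Delta}(1-1/C_t)$, so in total $\Pr[e\text{ output uncolored}]\le \exp(-\Omega(\Delta^{1/3}))+2\Delta^{-1/12}\le \Delta^{-1/24}$ once $\Delta\ge\Delta_{\ref{lem:random-color-tree}}$. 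For the online claim: when $e_i=(u,v)$ arrives, simulate rounds $t=1,2,\dots,\Delta$ in order, stopping at the first round in which $e_i$ is matched; whether $e_i$ lies in the round-$t$ graph, whether $u$ or $v$ is already matched in round $t$, and the degrees $d_u,d_v$ entering line~\ref{line:checkr} depend only on the (already computed) fates of $e_1,\dots,e_{i-1}$ and fresh coins, and the removal step is maintained with per-vertex running-degree counters.

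The main obstacle is making the step ``the current graph is a valid \cref{algo:matching} instance'' rigorous without paying a union bound over all $n$ vertices of the tree: Azuma only gives per-vertex failure probability $\exp(-\Omega(\Delta^{1/3}))$, since the buffer $\Delta^{2/3}$ is as large as the schedule allows. The point to exploit is that the exact-cancellation identity behind \cref{lem:tree-exact-probability} makes the matching probability of $e$ monotone in the ``validity'' of vertices other than $u,v$: capping a sampling probability at $1$ at a high-degree vertex, or deleting such a vertex, only makes $e$ more likely to be matched. Formalizing this---e.g.\ by coupling the true run with a ``pruned'' run in which every vertex other than $u,v$ is forced below the validity threshold, and checking the coupling moves $\Pr[e\text{ matched in round }t]$ upward---reduces the degree-concentration requirement to the two endpoints $u,v$, at which point a two-vertex union bound suffices. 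Carrying out this monotonicity argument, together with the supermartingale control of the self-referential death chain, is where the real work lies.
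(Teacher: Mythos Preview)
Your decomposition---either an endpoint of $e$ is sent to $R$, or $e$ survives all $\Delta$ rounds unmatched---is exactly the paper's structure, and your telescoping bound $\prod_{t}(1-1/C_t)\le\prod_{t}C_{t+1}/C_t=C_{\Delta+1}/C_1\le 2\Delta^{-1/12}$ is a clean alternative to the paper's integral estimate (both give order $\Delta^{-1/12}$, well below $\Delta^{-1/24}$). The online-implementation paragraph is also essentially the paper's.

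Where you diverge is in the removal bound and the ensuing ``main obstacle.'' The paper does not track $\deg_t(w)$ against a deterministic trajectory via Azuma. Instead it argues directly: whenever $d_v\ge C_t-\Delta^{3/4}$, the probability that $d_v$ \emph{fails} to drop in round $t$ is $(C_t-d_v)/C_t\le\Delta^{3/4}/C_t\le 2\Delta^{-1/6}$, so the gap $d_v-(C_t-\Delta^{3/4})$ behaves like a biased walk that steps up by $1-\Delta^{-1/12}$ with probability $\le 2\Delta^{-1/6}$ and down by $\Delta^{-1/12}$ otherwise. A last-crossing argument plus Chernoff then gives $\Pr[v\text{ removed}]\le\exp(-\Omega(\Delta^{3/4}))$, sharper than your $\exp(-\Omega(\Delta^{1/3}))$ and with no supermartingale comparison to set up.

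More importantly, the paper never union-bounds over all vertices and never invokes monotonicity: it bounds removal only for the two endpoints $u,v$ of the fixed edge $e$, and on the complementary event directly applies \cref{lem:tree-exact-probability} to the current forest to get $\Pr[e\text{ matched in round }t\mid\text{past}]=1/C_t$. So the ``obstacle'' you identify is not one the paper addresses. Your proposed fix is also not correct as stated: the claim that ``capping a sampling probability at $1$, or deleting a high-degree vertex, only makes $e$ more likely to be matched'' is false in general, because---exactly as in \cref{lemma:obliviouschoice}---the sign of the effect on $\Pr[e\text{ matched}]$ alternates with the parity of the distance to $e$. Capping or deleting at an odd-distance vertex can decrease it. You are right that there is a subtle point about validity of sampling probabilities at vertices other than $u,v$ after line~\ref{line:remove2}; the paper simply does not engage with it, and your monotonicity coupling does not close it either.
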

\begin{proof}
First, we clearly see that \cref{algo:random-coloring} can be implemented online: simply run the stages of edge coloring in parallel, passing to the next stage as needed. Since the instances of \cref{algo:matching} which are called in the iteration are themselves online algorithms, one can maintain the necessary data in parallel online to run them.

Recall $C = \Delta+\Delta^{3/4}$. We initially show that for a fixed vertex $v$, its probability of removal is low. Notice that at every stage $i$ of the matching it has a probability $d_v/(C-(i-1)(1-\Delta^{-1/12}))$ of its degree $d_v$ decreasing by $1$. It is removed if $d_v > \Delta+\Delta^{3/4}-(i-1)(1-\Delta^{-1/12})$ occurs.

The point is that if ever $d_v\ge\Delta-(i-1)(1-\Delta^{-1/8})$, then $d_v$ stays the same with probability at most $2\Delta^{3/4-11/12} = 2\Delta^{-1/6}$. If $v$ is removed, consider the last point in time that $d_v < \Delta-(i-1)(1-\Delta^{-1/12})$. After this point and until it is removed, we have a consecutive string of runs in which the quantity $d_v+(i-1)(1-\Delta^{-1/12})$ must go up by a total of $\Delta^{3/4}$, yet it goes up by $1-\Delta^{-1/12}$ with probability at most $2\Delta^{-1/6}$ and goes down by $-\Delta^{-1/12}$ with probability at least $1-2\Delta^{-1/6}$ (those being the only two options). This step clearly has a \emph{negative} mean with magnitude at least $\Omega(\Delta^{-1/12})$. We also see that it will take at least $\Delta^{3/4}$ steps and at most $\Delta$ steps. Therefore, the probability that it gets removed after $t$ steps is bounded by $\exp(-\Omega(\Delta^{3/4}))$ by the Chernoff bound. Taking a union bound over the possible values of $t$, we see that the probability that $v$ is removed is bounded by $\exp(-\Omega(\Delta^{3/4}))$.

Next, for a fixed edge $e$, we therefore see that either of its two vertices are removed with probability at most $\exp(-\Omega(\Delta^{3/4}))$. In the remaining cases, at each stage by \cref{lem:tree-exact-probability} it is included with probability $1/(C-(i-1)(1-\Delta^{-1/12}))$, conditional on the previous outcomes. Therefore its probability of not being included in this case is
\begin{align*}
\prod_{i=1}^\Delta\bigg(1-\frac{1}{C-(i-1)(1-\Delta^{-1/12})}\bigg)&\le\exp\bigg(-\sum_{i=1}^\Delta\frac{1}{C-(i-1)(1-\Delta^{-1/12})}\bigg)\\
&\le\exp\bigg(-\frac{1}{1-\Delta^{-1/12}}\int_{C-(\Delta-1)(1-\Delta^{-1/12})}^C\frac{dt}{t}\bigg)\le\Delta^{-1/24}
\end{align*}
for $\Delta$ sufficiently large.
\end{proof}

We now prove that in the random order case it suffices to understand the tree case. The intuition is that for an edge $e$ and an arrival ordering on other edges, an edge $f$ does not affect the coloring of edge $e$ if there is no path from $f$ to $e$ in the order given by the input.
\begin{definition}[Witness edges]
\label{def:witnessedge}
Consider a graph $G$ with $m$ edges and a permutation $e_1, \dots, e_m$ on the edges of $G$. For an edge $e = e_{i(e)}$, and an edge $e_i \neq e$ we say that $e_i$ is a \emph{witness} for $e$ if there are indices $i = i_0 < i_1 < \dots < i_\ell = i(e)$ such that the edges $e_{i_0}, e_{i_1}, \dots, e_{i_\ell}$ for a path.
\end{definition}
Note that $e$ is a witness of itself.
The key point is that the probability an edge $e$ is matched during an invocation of \cref{algo:random-coloring} only depends on the witness edges (and their ordering).
\begin{lemma}
\label{lemma:witnessonly}
Given an ordering of edges in a graph $G$ and an edge $e$, let $W$ be the set of witness edges (\cref{def:witnessedge}). Then the probability that edge $e$ is matched by \cref{algo:random-coloring} when run on graph $G$ is the same as when run as only edges in $W$ with the same ordering.
\end{lemma}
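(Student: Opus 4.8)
The plan is to prove \Cref{lemma:witnessonly} by exhibiting an explicit coupling between the execution of \cref{algo:random-coloring} on all of $G$ and its execution on the subgraph $W$ (with the induced ordering), under which the event ``$e$ is matched'' has the same indicator in both runs. The key structural fact to extract first is the following: fix the random bits used by \cref{algo:matching} at each stage, indexed by edge. Then, for any edge $f$ that is \emph{not} a witness of $e$, the decision made at $f$ in any stage can affect the decision at $e$ only through a chain of edges each of which is, in the input order, a predecessor of the next along a common-vertex path ending at $e$ --- but such a chain is exactly the definition of a witness path, so every edge in it (including $f$) would have to be a witness of $e$, a contradiction. Thus non-witness edges are causally irrelevant to $e$.

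Concretely, I would first reduce to a single stage (one call of \textsc{Matching}): since \cref{algo:random-coloring} runs $\Delta$ matching stages, removing matched edges between stages, and since the set of witnesses of $e$ is closed under the operation ``restrict the ordering and pass to a sub-ordering'' (a witness path in $W$ is a witness path in $G$, and conversely any witness path in $G$ lies inside $W$ by definition), it suffices to show that within each stage, the matching status of $e$ at the end of that stage depends only on the witness edges present at the start of that stage. So the heart is: \textbf{Claim.} For a fixed edge ordering and fixed per-edge random thresholds $R_f$, whether $\textsc{Matching}(G,C)$ puts $e$ in $\mathcal M$ equals whether $\textsc{Matching}(W,C)$ puts $e$ in $\mathcal M$, where $W$ is the witness set.

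To prove the Claim I would induct on the position $i(e)$ of $e$ in the ordering. When $e=(u,v)$ is processed, its fate is determined by (i) whether $u$ or $v$ is already matched, and (ii) the counters $d_u,d_v$ and the threshold $R_e$. Item (ii) involves only counts of earlier edges adjacent to $u$ or $v$ --- and every earlier edge adjacent to $u$ or $v$ is a length-two witness path $e_i, e$, hence a witness, so these counts agree in $G$ and $W$. For (i), $u$ is matched before $e$ iff some earlier edge $e_j \ni u$ was put in $\mathcal M$ at the time it was processed; applying the inductive hypothesis to $e_j$ (which has smaller index and whose own witness set is contained in that of $e$, since prepending $e_j$ to a witness path of $e_j$-to-$e$... more precisely any witness of $e_j$ is a witness of $e$ because concatenating a witness path to $e_j$ with the single step $e_j,e$ yields a witness path to $e$ --- here one must check $e_j$ really forms a path-step with $e$, which it does since they share $u$), the status of $e_j$ in the $G$-run equals its status in the run on its own witness set, which is a sub-run consistent with the $W$-run. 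Assembling these gives that $e$'s fate is identical in both runs. I would then lift back through the $\Delta$ stages: the edges surviving to stage $k$ among the witnesses of $e$ are exactly the witnesses of $e$ among the edges surviving to stage $k$ in the full run (again using closure of the witness relation under sub-orderings and the fact that removed edges are removed identically by the single-stage claim), so an easy induction on $k$ closes the argument.

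The main obstacle is bookkeeping the ``witness of a witness is a witness'' transitivity cleanly enough that the induction on $i(e)$ goes through without circularity --- in particular making precise that when we invoke the inductive hypothesis on an earlier edge $e_j$, the sub-ordering on \emph{its} witness set is a further restriction of the ordering on $W$, so the two sub-runs genuinely agree on $e_j$. A secondary subtlety is the vertex-removal step (\cref{line:remove2} of \cref{algo:random-coloring}): a vertex is removed when its current degree hits $C$, and this degree count is over surviving edges, not witnesses; but a non-witness edge incident to $u$ or $v$ cannot change whether the \emph{specific} edge $e$ is output, so one should phrase the claim as ``the output status of $e$'' rather than ``the entire state'', and check that the only way removal affects $e$ is via removal of $u$, $v$, or an endpoint of a witness of $e$ --- all of which are controlled by witness edges. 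Once these two points are handled carefully, the rest is routine.
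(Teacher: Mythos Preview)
Your coupling-based approach is sound for the core single-stage claim and takes a genuinely different route from the paper. The paper's proof inducts on the ordering by repeatedly peeling off the \emph{earliest} edge $f$: if $f$ is not a witness, then $f$ can only influence later edges adjacent to $f$ (since \cref{algo:matching}'s decision at any edge depends only on earlier incident edges), and any such later edge $f'$ that is a witness of $e$ would make $f$ a witness as well via the path $f,f',\dots,e$ --- contradiction, so $f$ is irrelevant to $e$ and can be dropped. This ``forward'' peeling argument is two lines and needs no coupling. Your argument instead fixes per-edge random thresholds and inducts ``backward'' on $i(e)$, using the transitivity that every witness of a witness of $e$ is a witness of $e$. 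This works, but the step where you invoke the inductive hypothesis for an earlier edge $e_j$ needs to be phrased as strong induction over \emph{all} ambient graphs: you apply it once with ambient graph $G$ (status of $e_j$ in the $G$-run equals its status in the $W(e_j)$-run) and once with ambient graph $W$ (status in the $W$-run equals its status in the $W_W(e_j)$-run, which equals the $W(e_j)$-run since $W(e_j)\subseteq W$). Your phrase ``a sub-run consistent with the $W$-run'' is gesturing at this but should be made explicit to avoid circularity.

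Your handling of the vertex-removal step (line~\ref{line:remove2} of \cref{algo:random-coloring}) has a gap. You assert that removal of an endpoint $u$ of $e$ is ``controlled by witness edges'', but the degree of $u$ in the surviving graph at each stage also counts edges incident to $u$ that arrive \emph{after} $e$ in the ordering --- and these are not witnesses of $e$. Hence $u$'s degree, and thus whether $u$ is removed, can genuinely differ between the $G$-run and the $W$-run. The paper's own proof does not address this either (it reasons only about the sampling probabilities inside \cref{algo:matching}), so this is a shared omission rather than a flaw unique to your approach; in the application the removal events are shown separately to occur with probability $\exp(-\Omega(\Delta^{3/4}))$, so the discrepancy is immaterial downstream.
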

\begin{proof}
We use induction of suffixes of edges in the ordering. Let us consider appending a single edge $f$ to the start of an ordering. This does not affect whether later edges are witnesses. Thus it suffices to argue that if $f$ affects the probability that $e$ is matched by \cref{algo:random-coloring}, then it is a witness. Indeed, this only happens (by induction) if $f$ is adjacent to some later witness edge $f'$ because the sampling probabilities in \cref{algo:matching} only depend on current degrees of the vertices adjacent to an edge. By induction, there is a path from $f'$ to edge $e$ (by the definition of witness edge), so there is a path from $f$ to $e$ as desired.
\end{proof}
Surprisingly, for a random ordering, almost every edge in the sampled subgraph returned by \cref{algo:subsample-random} has that its witnesses form a tree.
\begin{lemma}\label{lem:tree-cuts}
Consider a graph $G$, and parameters $\Delta', g$ with $g \ge 10\Delta'.$ The probability that for all vertices $v$ that the number of subgraphs $G_i$ returned by \cref{algo:subsample-random} where some edge $e$ adjacent to $v$ has witnesses that \emph{do not} form a tree is at most
$4T(2e\Delta'/g)^{g} + 6(\Delta')^{5g} + C_{\ref{lem:tree-cuts}}\sqrt{T \log n}$ is at least $1/n$ over random edge orderings for some absolute constant $C_{\ref{lem:tree-cuts}}$.
\end{lemma}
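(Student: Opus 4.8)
\textbf{Proof plan for \cref{lem:tree-cuts}.} The goal is to bound, for each vertex $v$, the number of color classes $i \in [T]$ in which some edge incident to $v$ has a witness set (in the ordering inherited from $G$ on the subgraph $G_i$) that fails to be a tree. The plan is to split the failure event into two sources, exactly paralleling the two cases of \cref{lemma:highgirth}: (i) the witness edges of $e$ contain too many edges, so that even if acyclic they extend ``too far'' and we lose control; and (ii) the witness edges contain a cycle. For the cycle part I would reuse \cref{lemma:highgirth} essentially verbatim: the witnesses of $e$ within distance $g$ of $v$ can only contain a cycle if the $g$-neighborhood of $e$ in $G_i$ contains a cycle, and since each $G_i'$ (hence $G_i \subseteq G_i'$) includes each edge with probability exactly $1/T = (1+o(1))\Delta'/\Delta$, the probability that the $g$-neighborhood of a fixed edge $e$ in $G_i$ has a cycle is at most $O((\Delta')^{5g}/\Delta)$. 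Summing over the at most $\Delta$ edges at $v$ and the $T$ colors, the expected number of bad colors from cycles is $O(T \cdot \Delta \cdot (\Delta')^{5g}/\Delta) = O(T(\Delta')^{5g})$; the factor $6(\Delta')^{5g}$ in the statement comes from dividing by $T$ somewhere in the bookkeeping, i.e. we are really bounding a per-$v$ count and the $(\Delta')^{5g}$ term is the expectation of ``fraction of colors bad due to cycles'' scaled appropriately.

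For the ``too many witnesses'' part, the key observation is that in a random ordering, a witness of $e$ at graph-distance $\ell$ requires a path $e = e_{i_0}, e_{i_1}, \dots, e_{i_\ell}$ whose arrival times are \emph{increasing} along the path; the probability of a fixed length-$\ell$ path being increasing is $1/(\ell+1)!$ (or $1/\ell!$, depending on whether $e$ itself is counted). Combined with the fact that in $G_i$ there are at most $(\Delta')^{\ell}$ paths of length $\ell$ from a fixed endpoint (since $\Delta(G_i) \le \Delta' + O(\sqrt{\Delta'\log\Delta'}) = (1+o(1))\Delta'$), the expected number of length-$\ell$ witness paths is at most $((1+o(1))\Delta')^\ell / \ell!$. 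This is where the hypothesis $g \ge 10\Delta'$ enters: for $\ell \ge g$, by Stirling $((1+o(1))\Delta')^\ell/\ell! \le (e\Delta'(1+o(1))/\ell)^\ell \le (2e\Delta'/g)^\ell$, which is exponentially small and summable, giving a tail bound $\sum_{\ell \ge g} (2e\Delta'/g)^\ell = O((2e\Delta'/g)^g)$. So for a fixed edge $e$ and color $i$, the probability that $e$ has a witness at distance $\ge g$ is $O((2e\Delta'/g)^g)$; summing over $\le \Delta$ edges at $v$ and over $T$ colors gives expected count $O(T\Delta(2e\Delta'/g)^g)$, and again after the per-$v$/per-color normalization this is the $4T(2e\Delta'/g)^g$ term. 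The point of bounding the witness \emph{radius} is that once all witnesses of every edge at $v$ lie within distance $g$ and there are no cycles in that ball, the witness set is genuinely a subtree of $G_i$.

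Having controlled expectations, the final step is concentration: we want a high-probability (at least $1 - 1/n$) bound, not just an expectation bound, which accounts for the additive $C_{\ref{lem:tree-cuts}}\sqrt{T\log n}$ term. Here I would argue as in the proof of \cref{lem:random-degree-reduction}: fix $v$ and consider, across the $T$ colors, the indicator $Y_i$ that color $i$ is bad for $v$ (either a witness escapes the $g$-ball or a cycle appears in the $g$-ball in $G_i$). The colors are assigned to edges independently, but the events $Y_i$ share the randomness of the global ordering; however, conditioned on the ordering, the color assignments are independent, and the relevant counts $Z_i$ (number of edges at $v$, or along relevant short paths, colored $i$) are negatively associated (cf.\ \cite[Proposition~E.6]{BGW21}). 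Then $\sum_i Y_i$ is a sum of negatively-associated bounded random variables with mean $O(T(2e\Delta'/g)^g + T(\Delta')^{5g}/ \cdot)$, so Hoeffding's inequality for NA variables gives $\Pr[\sum_i Y_i \ge \mathbb{E} + \sqrt{T}\,t] \le \exp(-\Omega(t^2))$; taking $t = \Theta(\sqrt{\log n})$ and union-bounding over the $n$ vertices $v$ yields the claimed failure probability $1/n$.

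\textbf{Main obstacle.} The delicate point is handling the interaction between the two sources of randomness — the random ordering (which governs which edges are witnesses) and the random color assignment (which governs $G_i$ and its degrees). One must be careful that the ``increasing path'' event and the ``path survives in $G_i$'' event, while involving different randomness, are combined correctly: conditioning on the ordering, count expected bad colors using only the color randomness (via NA/Chernoff on degrees and path-survival), then take expectation over orderings, or vice versa. A second subtlety is that $G_i$ is not exactly the $1/T$-subsample $G_i'$ but a degree-truncated subgraph of it; since $G_i \subseteq G_i'$, any cycle or long path in $G_i$ is one in $G_i'$, so all the counting upper bounds for $G_i'$ transfer, and the degree of $v$ in $G_i$ is at most its degree in $G_i'$, which concentrates around $\Delta/T \le (1+o(1))\Delta'$ — this is why the $(1+o(1))\Delta'$ slack in the degree bound does not hurt the $g \ge 10\Delta'$ Stirling estimate. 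Modulo these bookkeeping points, the proof is a routine combination of \cref{lemma:highgirth}, the increasing-path count, Stirling, and the NA-Hoeffding concentration already used for \cref{lem:random-degree-reduction}.
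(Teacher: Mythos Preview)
Your plan is essentially the paper's proof: split the bad event for a color $i$ into (a) the $g$-neighborhood of $v$ in $G_i'$ contains a cycle (handled by \cref{lemma:highgirth}) and (b) some witness lies at distance $\ge g$ from $v$ (handled by counting increasing paths and Stirling, using $g \ge 10\Delta'$), then concentrate $\sum_i \mathbbm{1}[E_i(v)]$ via Hoeffding for negatively-associated variables, which applies because each $E_i(v)$ is a monotone function of the set of edges colored $i$.

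One bookkeeping point, since you flag it as unclear: the paper does \emph{not} sum over the $\Delta$ edges at $v$ and separately over the $T$ colors. It works at the vertex level, bounding $\mb{P}[E_i(v)] \le 3(\Delta')^{5g}/\Delta + 2(2e\Delta'/g)^g$ directly for each $i$, and then sums only over $i \in [T]$; since $T \le 2\Delta/\Delta'$ the first term becomes $\le 6(\Delta')^{5g-1} \le 6(\Delta')^{5g}$ and the second becomes $\le 4T(2e\Delta'/g)^g$, which is where those constants come from with no further normalization. Relatedly, for the far-witness count the paper enumerates length-$\ell$ paths in $G$ from $v$ (at most $\Delta^\ell$ of them), multiplies by the survival probability $(\Delta'/\Delta)^\ell$ in $G_i'$, and by $1/\ell!$ for the ordering (these two randomness sources are independent), arriving at $(\Delta')^\ell/\ell!$ cleanly; your route via the degree cap on $G_i$ is also valid and gives the same bound up to a $(1+o(1))^\ell$ factor, which is harmless here.
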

\begin{proof}
We show the desired claim instead for the random subgraphs $G_i'$ -- this suffices as $G_i$ is a subgraph of $G_i'.$
Define $E_i(v)$ as the event (over randomness used to generate $G_i'$ and the random ordering) where vertex $v$ in $G_i'$ has a cycle in its $g$-neighborhood, or some witness edges for $v$ are distance at least $g$ from $v$. Note that $\neg E_i(v)$ implies that the witness edges of $v$ in $G_i'$ form a tree.

We upper bound $\mb{P}[E_i(v)].$ We start by bounding the probability that $v$ has a cycle in This is at most $3(\Delta')^{5g}/\Delta$ by \cref{lemma:highgirth}.
Now we bound the probability that an edge of distance more than $g$ from $v$ in $G_i'$ is a witness. The number of paths of length $\ell$ starting at $v$ is bounded by $\Delta^\ell.$ The probability that such a path has an endpoint edge which is a witness is at most $1/\ell!$, as the edges in the random order must be in the order of the path. Hence the probability that some edge of distance more than $g$ from $v$ is a witness in $G_i'$ is at most
\[ \sum_{\ell \ge g} \Delta^\ell/\ell! \cdot (\Delta'/\Delta)^\ell \le \sum_{\ell \ge g} (e\Delta'/\ell)^\ell \le 2(2e\Delta'/g)^{g} \]
for $g \ge 10\Delta'.$ Hence $\mb{P}[E_i(v)] \le 3(\Delta')^{5g}/\Delta + 2(2e\Delta'/g)^{g}$ for any $v$.

Note that the variables $E_i(v)$ for $i \in [T]$ are negatively associated because they are all monotone graph properties (see \cite[Appendix~E]{BGW21}) and hence by Hoeffding for negatively associated random variables (see e.g.~\cite[Proposition~E.6]{BGW21}) the result follows.

\end{proof}

\subsection{Completing the Proof}
We now complete the proof of the \cref{thm:random}. In order to prove the desired result consider the following algorithm. 
\begin{itemize}
    \item Divide $G$ in $G_1,\ldots,G_T,R$ using \cref{algo:subsample-random}.
    \item Color $G_1,\ldots,G_T$ using \cref{algo:random-coloring}.
    \item Color remaining edges in $R$ and not colored in the previous step using the greedy algorithm.
\end{itemize}

\begin{proof}[Proof of \cref{thm:random}]
In order to prove \cref{thm:random} it suffices to prove that the above algorithm can be simulated online and that the union of the remainder graphs coming from $G_1,\ldots,G_T$ and $R$ when combined have maximum degree $o(\Delta)$. Furthermore it will be useful to define the graph $R'$ consisting of adjacent edges to vertices $v$ in some $G_i'$ whose witness edges (\cref{def:witnessedge}) do not form a tree. The first claim is clear as both \cref{algo:subsample-random,algo:random-coloring} can be implemented online. 

By the second bullet point of \cref{lem:random-degree-reduction} we have with probability $1-1/n$ that the maximum degree of $R$ is at most $C_{\ref{lem:random-degree-reduction}}(\Delta/(\Delta')^2 + \sqrt{\Delta \log n})$. The additional edges in $R'$ come from edges adjacent to vertices $v$ in some $G_i'$ whose witness edges (\cref{def:witnessedge}) do not form a tree. The maximum degree in each $G_i$ is at most $2\Delta'$ by the first bullet point of \cref{lem:random-degree-reduction}. Hence by \cref{lem:tree-cuts} with probability at least $1-1/n$ the degree of any vertex in $R'$ is bounded by
\begin{align} &C_{\ref{lem:random-degree-reduction}}(\Delta/(\Delta')^2 + \sqrt{\Delta \log n}) + 2\Delta' \cdot \left(2T(2e\Delta'/g)^{g/2} + (\Delta')^g + C_{\ref{lem:tree-cuts}}\sqrt{T \log n} \right) \nonumber \\
\le ~& O\left(\frac{\Delta}{(\Delta')^2} + \Delta(2e\Delta'/g)^{g/2} + (\Delta')^g + \sqrt{\Delta\Delta'\log n} \right) \label{eq:degrprime}
\end{align}
as long as $g \ge 10\Delta'.$ Let the expression in \eqref{eq:degrprime} be $\Delta_{R'}.$

Now we consider additional edges removed from the union of $G_1, \dots, G_i$ through line \ref{line:remove2} of \cref{algo:random-coloring}. Note that \cref{algo:random-coloring} runs independently for $G_1,\ldots,G_T$. By the linearity of expectation, for a fixed vertex the expected number of remaining edges not in $R'$ and not chosen is at most $3(\Delta')^{23/24}$. Furthermore this is at most $2\Delta'$. Hence by Azuma-Hoeffding the probability that there are more than $6(\Delta')^{23/24}T + C\sqrt{T\log n}\Delta' = 6\Delta/(\Delta')^{1/24} + C\sqrt{\Delta\Delta'\log n}$ at a particular vertex is at most $1/n^2$ for $C$ sufficiently large. Note that here we have implicitly assumed that $g\ge 2\Delta'$.

Therefore with probability at least $1-3/n$ we have that the maximum degree in the leftover graph is at most
\[ \Delta_{R'} + 6\Delta/(\Delta')^{1/24} + C\sqrt{\Delta\Delta'\log n}. \]
assuming that $g\ge 10\Delta'$. Now setting $g = 10\Delta'$, $\Delta' = c\min(\log\Delta/\log\log\Delta,\sqrt{\Delta/\log n})$ for a sufficiently small absolute constant $c > 0$ and noting that $\Delta = \omega(\log n)$ implies that $\Delta' = \omega(1)$ gives the desired result.
\end{proof}

\end{document}